\newtheorem{theorem}{Theorem}[section]
\newtheorem{corollary}{Corollary}
\newtheorem{lemma}[theorem]{Lemma}
\newtheorem{proposition}{Proposition}
\theoremstyle{definition}
\newtheorem{definition}[theorem]{Definition}
\newtheorem{remark}{Remark}
\newtheorem*{notation}{Notation}
\newtheorem{example}{Example}
\title[LP Bounds for Distributed Storage Codes] 
      {Linear Programming Bounds for Distributed Storage Codes}
\author[Ali Tebbi, Terence Chan and Chi Wan Sung]{}
\subjclass{Primary: 68P20, 94B05, 68P30; Secondary: 90C05.}
 \keywords{Distributed storage network, erasure code, linear programming, locally repairable code, update complexity}
 \email{ali.tebbi@unisa.edu.au}
 \email{terence.chan@unisa.edu.au}
 \email{albert.sung@cityu.edu.hk}
\thanks{This work was partially supported by a grant from Australian Research
Council (DP1094571) and the University Grants Committee of the Hong Kong
Special Administrative Region, China (Project No. AoE/E-02/08).}
\begin{document}

\def\N{\mathcal{N}}
\def\As{A_{\mathbf{s}}}
\def\Bs{B_{\mathbf{s}}}
\def\Cs{C_{\mathbf{s}}}
\def\SN{S_{\mathcal{\n}}}
\def\w{\mathbf{w}}
\def\z{\mathbf{z}}
\def\c{\mathbf{c}}
\def\u{\mathbf{u}}
\def\v{\mathbf{v}}
\def\h{\mathbf{h}}
\def\g{\mathbf{g}}
\def\f{\mathbf{f}}
\def\0{\mathbf{0}}
\def\s{\mathbf{s}}
\def\Aw{A_{\mathbf{w}}}
\def\Bw{B_{\mathbf{w}}}
\def\Cw{C_{\mathbf{w}}}
\def\M{\mathcal{M}}
\def\C{\mathcal{C}}
\def\I{\mathcal{I}}
\def\Cd{\mathcal{C}^\perp}
\def\FF{\mathbb{F}}
\def\FFq{\mathbb{F}_q}
\def\FFqn{\mathbb{F}_{q}^{n}}
\def\FFqk{\mathbb{F}_{q}^{k}}
\def\lambdaC{\Lambda_{\mathcal{C}}}
\def\lambdaCd{\Lambda_{\mathcal{C}^{\perp}}}

\def\calJ{{\cal J}}
\def\support{{\Lambda}}
\def\br{{\bf r}}
\def\bc{{\bf c}}
\def\bz{{\bf z}}
\def\bs{{\bf s}}
\def\bu{{\bf u}}
\def\bv{{\bf v}}
\def\bg{{\bf g}}
\def\bff{{\bf f}}
\def\ground{{\Omega}}
\def\bc{{\bf c}}
\def\solA{{\bf A}}
\def\solB{{\bf B}}
\def\solC{{\bf C}}
\def\field{{\mathbb F}_{q}}
\def\Kp{{\mathcal K}}

\def\bz{{z_{\scriptscriptstyle \N}}}
\def\by{{y_{\scriptscriptstyle \N}}}
\def\l{\left}
\def\r{\right}
\def\gf{ {\mathbb F}}
\def\rankfn{\rho}
\def\A{\mathcal A}
\def\M{\mathcal M}
\def\X{\mathcal X}
\def\B{\mathcal B}
\def\reals{\mathbb R}
\def\Z{\mathcal Z}
\def\K{\mathcal K}
\def\C{\mathcal C}
\def\p{\prime}
\def\real{{\mathbb R}}
\def\dist{{W}}

\def\x{{\bf x}}
\def\y{{\bf y}}

\def\n{{N}}
\def\k{{K}}


\maketitle

\centerline{\scshape Ali Tebbi}
\medskip
{\footnotesize
 \centerline{Institute for Telecommunications Research}
   \centerline{University of South Australia}
} 

\medskip

\centerline{\scshape Terence Chan$^*$ and Chi Wan Sung$^{**}$}
\medskip
{\footnotesize
 \centerline{$^*$Institute for Telecommunications Research}
   \centerline{University of South Australia}
\centerline{$^{**}$Department of Electronic Engineering}
   \centerline{City University of Hong Kong}
}

\bigskip

 \centerline{(Communicated by the associate editor name)}

\begin{abstract}
A major issue of locally repairable codes is their robustness. If a local repair group is not able to perform the repair process, this will result in increasing the repair cost. Therefore, it is critical for a locally repairable code to have multiple repair groups. In this paper we consider robust locally repairable coding schemes which guarantee that there exist multiple distinct (not necessarily disjoint) alternative local repair groups for any single failure such that the failed node can still be repaired locally even if some of the repair groups are not available. We use linear programming techniques to establish upper bounds on the code size of these codes. We also provide two examples of robust locally repairable codes that are optimal regarding our linear programming bound. Furthermore, we address the update efficiency problem of the distributed data storage networks. Any modification on the stored data will result in updating the content of the storage nodes. Therefore, it is essential to minimise the number of nodes which need to be updated by any change in the stored data. We characterise the update-efficient storage code properties and establish the necessary conditions of existence update-efficient locally repairable storage codes.
\end{abstract}

\section{Introduction}

Significant increase in the internet applications such as social networks, file, and video sharing applications, results in an enormous  user data production which needs to be stored reliably. In a data storage network, data is usually stored in several data centres, which can be geographically separated. Each data centre can store a massive amount of data, by using hundreds of thousands of storage disks (or nodes). The most common issue with all storage networks is failure. The failure in storage networks is the result of the failed components which are varied from corrupted disk sectors to one or a group of failed disks or even the entire data centre due to the physical damages or natural disasters. The immediate consequence of failure in a storage network is data loss. Therefore, a repair mechanism must be in place to handle these failures. Protection against failures is usually achieved by adding redundancies \cite{macwilliams}. The simplest way to introduce redundancy is \emph{replication}. In this method multiple copies of each data block are stored in the network. In the case of any failure in the network, the content of the failed node can be recovered by its replica. Because of its simplicity in storage and data retrieval, this method is used widely in conventional storage systems (e.g., RAID-1 where each data block is duplicated and stored in two storage nodes) \cite{ReplDSN1Farsite,ReplDSN2PAST}. However, replication is not an efficient method in term of the storage requirement \cite{ReplvsEras2, ReplvsErasOcean}. 

Another method for data storage is \emph{erasure coding} which has gained a wide attention in distributed storage \cite{RAID, HybridOceanstore, ErasVSReplTotalRecall, EVENODD, MDSArray}. In this method as a generalisation of replication, a data block is divided into $\k$ fragments and then encoded to $\n$ ($\n > \k$) fragments via a group of mapping functions to store in the network \cite{ERC}. Erasure coding is capable of providing more reliable storage networks compared to replication with the same storage overhead \cite{ErasVsRepl}. It is noteworthy to mention that replication can also be considered as an erasure code (i.e., repetition code \cite{macwilliams}).

One of the challenges in storage networks using erasure codes is the repair bandwidth. When a storage node fails, the replaced node (i.e., newcomer) has to download the information content of the survived nodes to repair the lost data. In other words, the repair bandwidth of a distributed storage network is the total amount of the information which is needed to be retrieved to recover the lost data. Compared to the replication, It has been shown in \cite{ErasVsRepl} that in the systems with the same \emph{mean time to failure} (MTTF), using erasure codes reduces the storage overhead and the repair bandwidth by an order of magnitude.

From coding theory, it is well known that in a storage network employing an erasure code for data storage the minimum amount of information which has to be downloaded to repair a single failure is at least the same size of the original data \cite{ERC}. In other words, the amount of data transmitted in the network to repair a failed node is much more than the storage capacity of the node. \emph{Regenerating codes} which have been introduced in \cite{NCDSS} provide coding techniques to reduce the repair bandwidth by increasing the number of nodes which the failed node has to be connected during the repair procedure. It has been shown that there exists a tradeoff between the storage and repair bandwidth in distributed storage networks \cite{NCDSS}. Many storage codes in literature assume that lost data in a failed data centre can be regenerated from any subset (of size greater than a certain threshold) of surviving data centres. This requirement can be too restrictive and sometimes not necessary. Although these methods reduce the repair bandwidth, minimising I/O in distributed storage networks is more beneficial than minimising the repair bandwidth \cite{Khan:2011:SIR:2002218.2002224}. Storage codes such as regenerating codes that access all surviving nodes during repair process, increase I/O overhead. Disk I/O which is proportional to the number of the nodes involved in the repair process, seems to be the repair performance bottleneck of the storage networks \cite{LRCDimakis}.

The number of nodes which involve in repairing a failed node is defined as the code \emph{locality} \cite{LRCGopalan}. The concept of code locality relaxes the repairing requirement so that any failed node can be repaired by at least one set (of size smaller than a threshold) of surviving nodes. We refer to these sets as \emph{local repair groups}. Moreover, any erasure code which satisfies this property is referred as \emph{locally repairable code}.

One significant issue of locally repairable codes is their robustness. Consider a distributed storage network with $\n$ storage nodes such that there exists only one local repair group for each node in the network. Therefore, if any node fails, the survived nodes in its repair group can locally repair the failure. However, if any other node in the repair group is also not available (for example, due to the heavy loading), then the repair group cannot be able to repair the failed node anymore which will result in an increase of repair cost. To avoid this problem, it is essential to have multiple local repair groups which can be used for repairing the failure if additional nodes fail or become unavailable in the repair groups. The concept of locally repairable codes with multiple repair groups (or alternatives) was studied in \cite{LRmulti} where codes with multiple repair alternatives were also constructed. Locally repairable code construction with multiple disjoint repairing groups was investigated in \cite{LRCMultDisjoint}. Locally repairable codes with multiple erasure tolerance are introduced in \cite{6846301} where any failed node can be locally repaired even in the presence of multiple erasures. In \cite{LocComb}, locally repairable codes with multiple erasure tolerance have been constructed using combinatorial methods. A lower bound on the code size was derived and some code examples provided which are optimal with respect to this bound. A class of locally repairable codes with capability of repairing a set of failed nodes by connecting to a set of survived nodes are studied in \cite{CoopLocal}. Bounds on code dimension and also families of explicit codes were provided.

The main contributions of this paper are as follows:
\begin{itemize}
\item
We investigate the linear $(r,\Gamma,\zeta,\beta)$ robust locally repairable codes, defined in Definition \ref{def:RLRC}, where any failed node can be repaired by $\zeta$ different (i.e., distinct but not necessarily disjoint) repair groups each of size at most $r$ at the presence of any $\Gamma$ extra failures in the network. These codes have a minimum distance of $d=\beta+1$, i.e., these codes can tolerate any $\beta$ simultaneous failures.
\item
By employing the criteria from the definition of our robust locally repairable codes, we establish a linear programming bound to upper bound the code size. 
\item
We propose two robust locally repairable code examples which are optimal regarding our linear programming bound.
\item
We investigate the update efficiency of storage codes. In an erasure code, each encoded symbol is a function of the data symbols. Therefore, the update complexity of an erasure code is proportional to the number of encoded symbols that have to be changed in regard of any change in
the original data symbols. In other words, in a storage network the content of the storage nodes must be updated by any change in the stored data. We characterise the properties of update-efficient storage codes and establish the necessary conditions that the codes need to satisfy. 
\end{itemize}

Our linear programming bounds are alphabet dependant and establish upper bounds on the code size for arbitrary code parameters such as code length, minimum distance, local repair group's size, number of the local repair groups, and number of the failed nodes in the network. The significance of our linear programming bounds compared to the similar approaches (e.g., \cite{LRC-size-bound} or \cite{CombinatorialBound}) is that our bounds can be easily generalised for different variations of linear storage codes such that any criterion can be presented as a linear constraint in the problem. Moreover, the complexity of these problems can be significantly reduced by applying symmetries of the codes. 

This work is partially presented in 2013 9th International Conference on Information, Communications \& Signal Processing (ICICS) \cite{LPLR} and 2014 IEEE Information Theory Workshop (ITW) \cite{RLLC}. This work presents more detailed discussions on characterising robust locally repairable and update-efficient storage codes.   

The remaining of the paper is organised as follows. In Section \ref{Sec:background}, we review the related works on the locally repairable codes. In Section \ref{sec:LSC}, we introduce our robust locally repairable codes and propose their criteria. In Section \ref{Sec:LPbound}, we establish a linear programming problem to upper bound the code size. We also exploit symmetry on the linear programming bound to reduce its complexity. Section \ref{Sec:Update} studies the update efficiency of the locally repairable codes. We establish the necessary conditions for distributed storage codes in order to satisfy the locality and update efficiency at the same time. Examples and numerical results are provided in Section \ref{Sec:results}. The paper then concludes in Section \ref{Sec:conclu}.

\section{Background and Related Works} \label{Sec:background}

Disk I/O which is proportional to the number of the nodes involved in the repair process, seems to be the repair performance bottleneck of the storage networks \cite{LRCDimakis}. Regenerating codes are an example of this issue where $N-1$ surviving nodes are involved in repair process of a failed node in order to minimise the repair bandwidth \cite{Khan:2011:SIR:2002218.2002224}.
Therefore, the number of nodes which help to repair a failed node is a major metric to measure the repair cost in storage networks. Locally repairable codes aim to optimise this metric (i.e., the number of helper nodes needed to repair a failed node). The locality of a code is defined as the number of code coordinates $r$ such that any code coordinate is a function of them \cite{LRCGopalan}. Consequently, in a storage network employing a locally repairable code, any failed node can be repaired using only $r$ survived nodes. In the case of the linear locally repairable storage codes which are the topic of this paper, the content of any storage node is a linear combination of the contents of a group of $r$ other nodes. Usually two types of code locality are defined in the literature as information symbol locality and all symbol locality which means that just the information symbols in a systematic code have locality $r$ and all symbols have locality $r$, respectively. In this paper, the code locality is referred to all symbol locality since we do not impose any systematic code constraint on the locally repairable storage codes. 

The problem of storage codes with small locality has been initially proposed in \cite{Pyramid,SelfRepair,SimpleRegCode}, independently. The Pyramid codes introduced in \cite{Pyramid} are an example of initial works on practical locally repairable storage codes. A basic Pyramid code is constructed using an existing erasure code. An MDS erasure code will be an excellent choice in this case, since they are more tolerable against simultaneous failures. To construct a Pyramid code using an MDS code, it is sufficient to split $\k$ data blocks into $\frac{\k}{r}$ disjoint groups. Then, a parity symbol is calculated for each group as a local parity. In other words, one of the parity symbols from the original code is split into $\frac{\k}{r}$ blocks and the other parity symbols are kept as global parities. Therefore, any single failure in data nodes can be repaired by $r$ survived nodes in the same group instead of retrieving data from $\k$ nodes. Generally, for a Pyramid code with data locality $r$ constructed from an $(\n,\k,d)$ MDS code, the code length (i.e., number of storage nodes) is

\begin{align} \label{eq:data_locality_length}
\n = \k\left(1+\frac{1}{r}\right)+d-2.
\end{align}
Obviously, low locality is gained in the cost of $(\frac{\k}{r}-1)$ extra storage. To obtain all symbol locality for the code, it is sufficient to add a local parity to each group of the global parities of size $r$ where will result in a code length of 

\begin{align} \label{eq:all_locality_length}
\n  = \left(\k+d-2\right)\left(1+\frac{1}{r}\right).
\end{align}
Both of the code lengths in \eqref{eq:data_locality_length} and \eqref{eq:all_locality_length} show a linear increase in storage overhead. Note that in general, adding a local parity to each group will decrease the minimum distance of the code which is a major challenge in constructing optimal all-symbol locally repairable codes \cite{OptimalLocal}.

The pioneer work of Gopalan $\emph{et al.}$ in \cite{LRCGopalan} is an in-depth study of codes locality which results in presenting a trade-off between the code length $\n$, data block size $\k$, minimum distance of code $d$, and code locality $r$. It has been shown that for any $[\n,\k,d]_q$ linear code with information locality $r$,   

\begin{align} \label{eq:Gopalan}
\n-\k \geq \left\lceil{\frac{\k}{r}} \right\rceil + d - 2.
\end{align}
In other words, to have a linear storage code with locality $r$, we need a linear overhead of at least $\left\lceil{\frac{\k}{r}} \right\rceil + d - 2$. As a special case of locality $r = \k$, the bound in \eqref{eq:Gopalan} is reduced to 

\begin{align}
\n-\k \geq d - 1,
\end{align}
which is the Singleton bound and is achievable by MDS codes such as Reed-Solomon codes. According to \eqref{eq:data_locality_length} and \eqref{eq:all_locality_length}, Pyramid codes can achieve the bound in \eqref{eq:Gopalan} which shows their optimality. 

Due to their importance in minimising the number of helper nodes to repair a single failure, locally repairable codes have received a fair amount of attention in literature. Different coding schemes to design explicit optimum locally repairable codes has been proposed in \cite{LocalMatroid,6655894,OptimalLocal,6478813,6846332,8006478,martnez2018universal}.

For the bound in \eqref{eq:Gopalan}, the assumption is that any storage node has entropy equal to $\alpha = \frac{\M}{\k}$, where $\M$ is the data block size to be stored. It is also mentioned that low locality $r << \k$ affects minimum distance $d$ regarding the term $\frac{\k}{r}$. It has been shown in \cite{LRCDimakis} that there exist erasure codes which have high distance and low locality if storage nodes store an additional information by a factor of $\frac{1}{r}$. The minimum distance of these codes is upper bounded as

\begin{align} \label{eq:VecLRC}
d \leq \n - \left\lceil{\frac{\k}{1+\epsilon}}\right\rceil - \left\lceil{\frac{\k}{r\left(1+\epsilon \right)}}\right\rceil + 2,
\end{align}  
where $\epsilon > 0$. The obtained LRCs in \cite{LRCDimakis} are vector codes where each data and coded symbol is represented as a vector (not necessarily of the same length). The definition of the vector LRCs, together with the bound in \eqref{eq:VecLRC}, are extended to $(r,\delta)$ localities in \cite{6655894} which we will describe later.

The main issue with locally repairable codes is that they mostly tend to repair a single node failure. This means that if any of the helper nodes in the repair group of a failed node is not available, the local repair process will fail. Therefore, it is critical to have multiple repair groups for each node in the storage network. The concept of locally repairable codes with multiple repair alternatives has been studied initially in \cite{LRmulti}. The main idea in \cite{LRmulti} is to obtain the parity check matrix of the code from a partial geometries. However, there are only a few such explicit codes due to the limited number of known constructions of partial geometries with desirable parameters. A lower and upper bound is also obtained for the rate of these codes with different locality and repair alternativity. Rawat \emph{et al.} investigated the availability in storage networks employing locally repairable codes \cite{LRCMultDisjoint}. The notion of availability is useful for applications involving hot data since it gives the ability of accessing to a certain information by different users simultaneously. Therefore, each information node must have $t$ disjoint repair groups of size at most $r$ (i.e., $(r,t)$-Information local code). An upper bound for the code minimum distance has been obtained in \cite{LRCMultDisjoint} as

\begin{align} \label{eq:availability}
d_\text{min} \leq \n-\k+1- \left(\left\lceil{\frac{\k t}{r}} \right\rceil -t \right),
\end{align}
which is a generalised version of the bound in \eqref{eq:Gopalan}. However, this bound only stands for information symbols availability with the assumption of only one parity symbol for each repair group. Different bounds on the rate and minimum distance of storage codes with locality $r$ and availability $t$ have been obtained independently in \cite{TamotBound} as

\begin{align}
\frac{\k}{\n} \leq \frac{1}{\prod_{j=1}^{t}{\left(1+\frac{1}{jr}\right)}}
\end{align} 
and

\begin{align} \label{eq:Tamo_d_min}
d \leq \n - \sum_{i=0}^{t}{\left\lfloor\frac{\k-1}{r^i}\right\rfloor}.
\end{align}
The minimum distance upper bound in \eqref{eq:Tamo_d_min} reduces to \eqref{eq:Gopalan} for $t=1$. However, the bounds in \eqref{eq:availability} and \eqref{eq:Tamo_d_min} seem to be incomparable since the bound in \eqref{eq:Tamo_d_min} is tighter than \eqref{eq:availability} in some cases and vice versa.

An upper bound on the codebook size of locally repairable codes which is dependent on the alphabet size has been obtained in \cite{MazumdarBound} as

\begin{align}
\k \leq \min_{t \in \mathbb{Z}_+}\left[tr+\k_{\text{opt}}^{(q)}\left(n-t(r+1),d\right)\right]
\end{align}
where, $\k_{\text{opt}}^{(q)}$ is the maximum dimension of a linear or non-linear code of length $\n$ and minimum distance $d$ with alphabet size of $q$. This bound is tighter than the bound in \eqref{eq:Gopalan}.

Locally repairable codes with small availability $t=2,3$ are investigated in \cite{LRCsmallAvail}. The smallest the availability the highest the code rate would be since the dependency among the code symbols is less while the code is still comparable to the practical replication codes. The rate optimal binary $(r,t)$-availability codes are designed in \cite{LRC2Erasure}.

\emph{$(r,\delta)$-locally repairable codes} are introduced in \cite{6846301} where the failed coded symbols can be repaired locally by at most $r$ symbols at the presence of multiple failures in the network. The symbol $c_i$, $1 \leq i \leq N$, of a $(N,K)$ code $\C$ with minimum distance $d$ is said to have $(r,\delta)$ locality if there exists a subcode of $\C$ with support containing $i$ whose length is at most $r+\delta-1$ and whose minimum distance is at least $\delta$. For a systematic code with all information symbols $(r,\delta)$ locality, any systematic symbol can be repaired by other $r$ symbols  even if there are $\delta-2$ extra failures. The minimum distance of these codes is upper bounded as

\begin{align} \label{eq:rdeltaLRC}
d \leq \n-\k+1-\left(\left \lceil \frac{\k}{r} \right \rceil-1\right) (\delta-1).
\end{align}
Note that \eqref{eq:rdeltaLRC} is reduced to \eqref{eq:Gopalan} for $\delta=2$. 

\emph{Partial MDS (PMDS)} codes are introduced in \cite{6478813}. A $[mN,m(N-r)-s]$ linear code is a  $(r,s)$-PMDS code where each row of the $m \times N$ array belongs to an $[N,N-r,r+1]$ MDS code. PMDS codes combine local correction of $r$ errors in each row with global correction of $s$ erasures anywhere in the array. As mentioned earlier, in classes of locally repairable codes (e.g., PMDS codes or Pyramid codes) there are local parity symbols which depend only on a specific group of data symbols and some global parities which depend on all data symbols. While the local parities provide a fast repair of single failures, the global parities provide a high resilience against simultaneous failures. The codes described as above are called \emph{Maximally Recoverable (MR)} \cite{Pyramid} if they can repair all failure patterns which are information theoretically repairable. Families of explicit maximally recoverable codes with locality are proposed in \cite{6846332}. PMDS codes and MR codes with locality are a class of $(r,\delta)$-locally repairable codes that not only have optimal minimum distance, but can correct all information theoretically correctable erasure patterns for the given $(r, \delta)$, code length $N$ and dimension $K$.

A challenge to construct these type of optimal LRCs (e.g., PMDS codes) is to obtain small field sizes. An exponential field size of $O(N^{mN})$ is obtained in \cite{7740918} for PMDS codes with arbitrary $r$ and $s$. The authors in \cite{8006478}, extend the work in \cite{6846332} for $r>1$ to obtain a relatively small filed size of $O(\max \{m,N^{r+s}\}^s)$. A further reduction on the field size compared to \cite{8006478} for the case $r \leq s$ is obtained in \cite{martnez2018universal}.

Locally repairable codes with multiple localities are introduced in \cite{MultiLocal} and \cite{UneqLocal} such that coded symbols have different localities. An upper bound on the minimum distance of these codes with multiple locality of the information symbols is derived as
\[
d \leq \n-\k-\sum_{j=1}^r \left\lceil \frac{\k_j}{j} \right \rceil +2,
\]
where, $\k_j$ is the number of the information symbols with locality $j$. Locally repairable codes with multiple $(r_i,\delta_i)$ which are proposed in \cite{Multi_r_delta_local}, generalise the $(r,\delta)$ locally repairable codes in \cite{6846301} to the locally repairable codes with multiple localities in \cite{MultiLocal} and \cite{UneqLocal}. These codes ensure different locality for coded symbols at the presence of different number of extra failures. A Singleton-like bound on the minimum distance has been derived and explicit codes have been proposed. A family of maximally recoverable LRCs with multiple localities that admit efficient updates of the parameters $(r_i,\delta_i)$, in order to adapt dynamically to different erasure probabilities, network topologies or different hot and cold data, has been proposed in \cite{martnez2018universal}.

\begin{notation}
Let $\N=\left \{1,2,3, \ldots,\n\right \}$ and $2^\N$ be the set of all possible subsets of $\N$. In this paper we represent a subset $\nu$ of $\N$ as a binary vector $\w=[w_1,\ldots,w_N]$ such that $w_i=1$ if and only if $i\in{\N}$. In our convention, $\nu$ and $\w$ can be used interchangeably. Similarly, we use zero vector $\mathbf{0}$ to present an empty set $\varnothing$. Also, we will make no distinction between an index $i$ and $\{i\}$ for a single element set.    
\end{notation}

\section{Linear Locally Repairable Codes} \label{sec:LSC}

\subsection{Linear Storage Codes}

Let $ \FFq^\n=\left \{[z_1,\ldots,z_\n]:z_i \in \FFq \text{~for all~} i \in \N \right \}$ be the vector space of all $\n$-tuples over the field $\FFq$. A linear storage code $\C$ generated by a $\k \times \n$ full rank (i.e. rank $\k$) generator matrix $G$  is a subspace of $\FFq^\n$ over the field $\FFq$. Here, $\k$ is the dimension of the subspace. The linear code $\C=\left \{\u G : \u \in \FFq^\k \right \}$ is defined as the set of all linear combinations (i.e. spans) of the row vectors of the generator matrix $G$.

To store a data block of size $\M$ in a storage network using a linear code $\C$, the original data is first divided into $\k$ data blocks of size $ {\M}/{\k}$. We may assume without loss of generality that the original data is a matrix with $ {\M}/{\k}$ rows and $\k$ columns. 
Then  each block (or row) will be encoded using the same code. In this paper, we may assume without loss of generality that  there is only one row (and thus $\M = \k$).
Let $\u = [u_1,\ldots,u_\k]$ be the block of information to be stored. Then the storage node $i$ will store the symbol $\u \g_{i}$ where $\g_{i}$ is the $i^{th}$ column of the generator matrix $G$. 
Notice that if $X_{i}$ is the content that the storage node $i$ stores, then 
$(X_{1}, \ldots, X_{\n})  $   is a codeword in $\C$. 
To retrieve the stored information, it is sufficient to retrieve the information content of $\k$ nodes   (which are indexed by $\I$)  such that 
the columns $\g_{i} $ for $\forall i\in \I$ are all independent. Specifically, 

\begin{align}
\u=\c_\I G_{\I}^{-1},
\end{align}
where   $\c_\I$ is the row vector whose $i^{th}$ entry equals to $\u \g_{i}$ and $G_\I$ is a submatrix of the generator matrix $G$ formed by the columns $\g_i$ for $\forall i \in \I$. 

\begin{definition}[Support] \label{def:support}
For any vector $\c=\left[c_1,\ldots,c_\n\right] \in \FFq^\n$, its  support $\lambda(\c)$   is defined as a subset of $\N=\left\{1,\ldots,\n \right\}$ such that $i \in \lambda(\c)$ if and only if $c_i \neq 0$.
\end{definition}

For a linear storage code $\C$ over $\FFq$, any codeword $\c$ is a vector of $\n$ symbols $[c_1, \ldots, c_\n]$ where $c_i \in \FFq$ for all $i = 1, \ldots, \n$. Since we assume that code $\C$ is linear, for any codeword $\c \in \C$ the vector $a\c$ for all $a \in \FFq$ and $a \neq 0$ is also a codeword in $\C$. Therefore, there exists at least $q-1$ codewords in $\C$ with the same support (excluding the zero codeword). 

\begin{definition}[Support Enumerator]
The support enumerator $\lambdaC(\w)$ of the code $\C$ is the number of codewords in $\C$ with the same support $\w$, i.e.,  
\[
\lambdaC(\w) \triangleq \left | \left \{\c \in \C:\lambda(\c)=\w \right \} \right |, \qquad \forall \w \subseteq \N.
\]
\end{definition}

The support enumerator of a code is very useful to characterise the relation between a code $\C$ and its dual code $\Cd$ via MacWilliams theorem, hence we aim to bound the size of the locally repairable codes using the dual code properties. Any  linear code $\C$ can be specified by a $\left (\n-\k \right) \times \n$ parity check matrix $H$ which is orthogonal to the generator matrix $G$ such that
\[
GH^\top = \0,
\]
where $\0$ is a $\k \times (\n-\k)$ zero matrix. Then, this parity check matrix $H$ can generate another linear code $\Cd$ called dual code. It is well known that 
\[
\Cd=\left \{\h \in \FFq^\n : \h \c^\top=0 \text{~for all~} \c \in \C \right \}.
\] 
In other words, a linear code $\C$ can be uniquely determined by its dual code $\Cd$. The number of the codewords in $\Cd$ for any support $\w$ (i.e., support enumerator $\lambdaCd(\w$))   can also be determined uniquely by the support enumerator $\lambdaC(\w)$ of the code $\C$ via the well-known MacWilliams identities.

There are various variations or extensions of MacWilliams identity, all based on association schemes. In the following, we will restate the MacWilliam identity in favour of our purpose in this paper. 

\begin{proposition}[MacWilliams identity~{\cite[Ch.~5, Theorem 1]{macwilliams}}] \label{prop:duality}
Let $\C$ be an $(\n,\k)$ linear code and  $\Cd$ be its dual. Then for any codeword support $\w=\left[w_1,\ldots,w_\n \right] \subseteq \N$, 

\begin{align}
\lambdaCd(\w)=\frac{1}{\left | \C \right |} \sum_{\s=\left[s_1,\ldots,s_\n\right] \in 2^\N} \lambdaC(\s)\prod_{j=1}^{\n}\kappa_q(s_j,w_j)\geq0 \label{eq:duality}
\end{align}
where

\begin{align}\label{eq:prop1}
\kappa_q \left(s,w\right)=
\begin{cases}
 1 & \text{ if } w=0 \\ 
 q-1 & \text{ if } s=0  \text{ and }  w=1 \\ 
 -1 & \text{ otherwise. } 
\end{cases}  
\end{align}
\end{proposition}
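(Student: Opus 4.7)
The plan is to prove the identity by the standard character-theoretic (discrete Fourier) argument on $\FFq^\n$, specialised so that the coordinate-wise coefficients collapse exactly into the piecewise function $\kappa_q$. Fix any non-trivial additive character $\chi \colon \FFq \to \mathbb{C}^{*}$. Recall the orthogonality relation $\sum_{a \in \FFq}\chi(ax) = q\cdot \mathbb{1}[x=0]$, and the fact that $\sum_{\h \in \Cd}\chi(\mathbf{a}\cdot \h)$ equals $|\Cd|$ when $\mathbf{a}\in \C$ and $0$ otherwise (since $\Cd$ is the annihilator of $\C$ under the standard bilinear form, and the double dual is $\C$ itself).

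The first step is to rewrite $\lambdaCd(\w)$ as a sum of per-coordinate indicators. For each $j$, write $f_{w_j}(h_j) = \mathbb{1}[h_j = 0]$ when $w_j = 0$ and $f_{w_j}(h_j) = \mathbb{1}[h_j \neq 0]$ when $w_j = 1$, so that $\lambdaCd(\w) = \sum_{\h \in \Cd}\prod_{j=1}^{\n} f_{w_j}(h_j)$. Each $f_{w_j}$ can be expressed as a single Fourier sum: $f_0(x) = \tfrac{1}{q}\sum_{a \in \FFq}\chi(ax)$, and $f_1(x) = 1 - f_0(x) = \tfrac{1}{q}\bigl[(q-1)\chi(0)+\sum_{a \neq 0}(-1)\chi(ax)\bigr]$. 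In short, $f_{w}(x) = \tfrac{1}{q}\sum_{a \in \FFq} c_{w,a}\,\chi(ax)$ with $c_{0,a}=1$ for every $a$, $c_{1,0}=q-1$, and $c_{1,a}=-1$ for $a\neq 0$.

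The second step is to substitute, expand the product, and swap the order of summation. This yields
\[
\lambdaCd(\w) \;=\; \frac{1}{q^{\n}} \sum_{\mathbf{a} \in \FFq^{\n}} \Bigl(\prod_{j=1}^{\n} c_{w_j, a_j}\Bigr)\sum_{\h \in \Cd}\chi(\mathbf{a}\cdot \h).
\]
By the dual orthogonality relation just noted, the inner sum vanishes unless $\mathbf{a}\in \C$, in which case it equals $|\Cd| = q^{\n}/|\C|$. Therefore
\[
\lambdaCd(\w) \;=\; \frac{1}{|\C|}\sum_{\mathbf{a} \in \C}\prod_{j=1}^{\n} c_{w_j,a_j}.
\]

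The final step is to partition the sum over $\mathbf{a}\in \C$ according to the support $\s = \lambda(\mathbf{a})$ and verify the collapse into $\kappa_q$. For each fixed $\mathbf{a}$ with $\lambda(\mathbf{a})=\s$, the factor $c_{w_j,a_j}$ in coordinate $j$ depends on whether $a_j=0$ (i.e., on $s_j$) and on $w_j$, but crucially not on the specific non-zero value of $a_j$: for $a_j \neq 0$ one has $c_{0,a_j}=1$ and $c_{1,a_j}=-1$. Checking the four cases against the definition in \eqref{eq:prop1} gives $c_{w_j,a_j}=\kappa_q(s_j,w_j)$ in every case, so the product depends only on $\s$ and $\w$. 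Summing over the $\lambdaC(\s)$ codewords of support $\s$ then produces exactly \eqref{eq:duality}. The non-negativity is automatic because the left-hand side is an integer count.

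The only non-routine point is the coordinate-wise bookkeeping in the third step: it is easy to mis-account for the $w=1$ case, because $\mathbb{1}[h_j\neq 0]$ is not itself a single character but a difference, which is why the coefficients $c_{1,\cdot}$ split into $q-1$ (at $a=0$) and $-1$ (at $a\neq 0$). Once one confirms that these values are exactly $\kappa_q(0,1)$ and $\kappa_q(1,1)$ respectively, and that the $w=0$ column matches $\kappa_q(\cdot,0)=1$, the rest is mechanical.
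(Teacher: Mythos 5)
Your argument is correct: the per-coordinate indicator decomposition, the Fourier expansion with coefficients $c_{w,a}$, the annihilator identity $\sum_{\h \in \Cd}\chi(\mathbf{a}\cdot\h) = |\Cd|\,\mathbb{1}[\mathbf{a}\in\C]$, and the final case check showing $c_{w_j,a_j}=\kappa_q(s_j,w_j)$ all go through, yielding exactly \eqref{eq:duality}. The paper itself gives no proof of this proposition but simply cites MacWilliams--Sloane, and your character-sum derivation is essentially the standard argument from that source, specialised to the exact-support enumerator rather than the Hamming weight enumerator, so there is nothing further to reconcile.
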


\begin{remark}
Here, subsets  $\w$ and $\s$ of $\N$   are considered as binary vectors with nonzero coordinates corresponding to the indices indexed in them. 
\end{remark}

In the next subsection we will show the significance of dual code support enumerator in our definition of robust locally repairable codes.
\subsection{Locality of Linear Storage Codes}

A linear storage code has locality $r$, if for any failed node $i$, there exists a group of survived nodes of size at most $r$ (i.e., repair group) such that the failed node can be repaired by this group of nodes~\cite{LRCGopalan}.  To be precise,  consider a linear code $\C$ and   its dual code $\Cd$. 
Let $\h=\left[h_1,\ldots,h_\n \right]$ and $\c=\left[c_1,\ldots,c_\n \right]$ be the codewords in $\Cd$ and $\C$, respectively. Then, from the definition of the dual code 

\begin{align} \label{eq:h_c_innerprod1}
\h \c^\top=0, \quad \forall  \h \in \Cd, \:  \c \in \C.
\end{align}

Consequently, 

\begin{align}
\sum_{j \in \lambda(\h)}h_j X_j = 0,
\end{align}
where $\lambda(\h)$ is the support of the dual codeword $\h$. 
%
Now, if the node $i$ fails and $i \in \lambda(\h)$, the content of the failed node can be recovered from the set of surviving nodes indexed by $\lambda(\h) \setminus i$. In particular, 
$X_i=-h_{i}^{-1}\sum_{j \in \lambda \left(h\right)\setminus i}X_jh_j.$
As such, the set $\lambda(\h)\setminus i$ can be seen as  a repair group for node $i$.

\begin{definition}[Locally repairable code~\cite{LRCDimakis}]
An $(r,\beta)$ locally repairable code $\C $ is a linear code which  satisfies 

\begin{enumerate}
\item
\textbf{Local Recovery (LR).}
for any $i \in \N$, there exists $\h \in \Cd$ such that $i \in \lambda(\h)$  and $|\lambda(\h)|-1 \le r$.
\item
\textbf{Global Recovery (GR).} $\lambdaC(\w)=0$, for all $\w \subseteq \N$ such that $1 \leq |\w| \leq \beta$.
\end{enumerate} 
\end{definition}

For a $(r,\beta)$ locally repairable code, the local recovery criterion implies that any failed node $i$, can always be repaired by retrieving the contents stored in at most $r$ nodes. On the other hand, the global recovery criterion guarantees that the storage network can be regenerated as long as there are at most $\beta$ simultaneous node failures. In other words, minimum distance of the code is $d=\beta+1$.

\begin{remark} 
The LR criterion  requires the existence of a set of $r$ surviving nodes which can be used to repair a failed node. However, there is no guarantee that a failed node can still be efficiently repaired\footnote{
Strictly speaking, by the GR criterion, the failed node can still be repaired if there are no more than $\beta$ node failures. However, the global recovery cannot guarantee that each failed node can be repaired efficiently. In other words, one may need a much larger set of surviving nodes to repair one failed node.  
} if a few additional nodes also fail (or become unavailable). Therefore, it is essential to have alternative repair groups in case of multiple node failures.
\end{remark}

\begin{definition}[Robust locally repairable code] \label{def:RLRC}
An $(r,\beta,\Gamma,\zeta)$ robust locally repairable code is a linear code satisfying  the following criteria:
\begin{enumerate}
\item
\textbf{Robust Local Recovery (RLR).}
for any $i \in \N$ and $\gamma \subset \N \setminus i$ such that $|\gamma| = \Gamma$, there exists $\h_{1}, \ldots , \h_{\zeta} \in \Cd$ such that for all $j=1, \ldots, \zeta$, 
\begin{enumerate}
\item 
$i \in \lambda(\h_{j})$, $\gamma \cap \lambda(\h_{j}) = \emptyset$, and $|\lambda(\h_{j})|-1 \le r$.

\item 
$\lambda(\h_{j}) \neq \lambda(\h_{\ell})$ for $\ell \neq j$.

\end{enumerate}

\item
\textbf{Global Recovery (GR).} $\lambdaC(\w)=0$, for all $\w \subseteq \N$ such that $1 \leq |\w| \leq \beta$.
\end{enumerate} 
\end{definition}

The RLR criterion  guarantees that a failed node can be repaired locally from any one of the $\zeta$ groups of surviving nodes of size $r$ even if $\Gamma$ extra nodes fail.  In the special case when  $\Gamma=0$, then the robust locally repairable codes are reduced to locally repairable codes with multiple repair alternatives as in \cite{LRmulti}. When $\zeta=1$, then it further reduces to the traditional locally repairable codes. 
The GR criterion is the same as that for locally repairable codes.

\section{Linear Programming Bounds} \label{Sec:LPbound}

One of the most fundamental problems in storage network design is to optimise the tradeoff between the costs for storage and repair. For a storage network using an $(\n,\k)$ linear code the capacity of each node is $\frac{\M}{\k}$ where $\M$ is the information size. In other words, the storage cost (per information per node) is given by $\frac{\M}{\log_q |\C|}$. Since our coding scheme is linear locally repairable, a group of $r$ nodes participate in repairing a failed node. Thus, the repair cost of a single failure is given by $\frac{r\M}{\log_q |\C|}$. Consequently, by maximising the codebook size $|\C|$ the cost will be minimised.

In this section, we will obtain an upper bound for the maximal codebook size, subject to robust local recovery and global recovery criteria in Definition \ref{def:RLRC}.  

\begin{theorem} \label{thm:NLOP}
Consider any $(r,\beta,\Gamma,\zeta)$ robust locally repairable code $\C$. Then, $|\C|$ is upper bounded by the optimal value of the following optimisation problem.
\begin{equation}
\begin{array}{lll}
\textbf{maximize} \quad \sum_{\w \subseteq \N}\Aw \\ 
\textbf{subject to}  \\ 
\qquad \Aw \geq 0 & \quad \forall \w \subseteq \N \\ 
\qquad \Bw = \frac{\sum_{\s \subseteq \N} \As \prod_{j=1}^{n} \kappa_q (s_j,w_j)}{\sum_{\s \subseteq \N} \As} & \quad \forall \w \subseteq \N \\ 
\qquad \Bw  \geq 0 & \quad \forall \w \subseteq \N \\ 
\qquad \Aw = 0 & \quad 1 \leq |\w| \leq \beta \\ 
\qquad A_{\varnothing} = 1 & \\ 
\qquad \displaystyle \sum_{\s \in \Omega_{i}: \gamma \cap \s = \emptyset } \Bs  \geq  \zeta (q-1) & \quad \forall i \in \N, \gamma \in \Delta_{i}
\end{array} \label{Eq:thm1}
\end{equation}
where 
$\Omega_{i}$ is the collection of all subsets of $\N$ that contains $i$ and of size at most $r+1$ and 
$\Delta_{i}$ is the collection of all subsets of $\N \setminus i$ of size at most $\Gamma$.
\end{theorem}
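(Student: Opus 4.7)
The plan is to exhibit, for any $(r,\beta,\Gamma,\zeta)$ robust locally repairable code $\C$, a feasible solution of the optimisation problem \eqref{Eq:thm1} whose objective value equals $|\C|$. The optimal LP value will then be at least $|\C|$, which is precisely the claimed upper bound. The whole argument is structured around translating each of the two criteria of Definition \ref{def:RLRC} into one family of LP constraints, with the MacWilliams identity linking the primal and dual support enumerators.

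First I would set $A_{\w} := \lambdaC(\w)$ for every $\w \subseteq \N$. Non-negativity of $A_{\w}$ is immediate from Definition \ref{def:support}, and $A_{\varnothing}=1$ because the only codeword with empty support is the zero codeword. The global recovery criterion of Definition \ref{def:RLRC} directly gives $A_{\w}=0$ for $1 \leq |\w| \leq \beta$. Summing over all subsets of $\N$ yields $\sum_{\w \subseteq \N} A_{\w} = |\C|$, so this feasible point achieves the objective value $|\C|$.

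Next I would apply Proposition \ref{prop:duality}. Since $\sum_{\s} A_{\s} = |\C|$, the defining expression for $B_{\w}$ in \eqref{Eq:thm1} coincides with the MacWilliams identity, giving $B_{\w} = \lambdaCd(\w)$. Hence $B_{\w} \geq 0$, because it counts dual codewords with a prescribed support. It then remains to verify the RLR linear constraint. Fix $i \in \N$ and $\gamma \in \Delta_{i}$. If $|\gamma|<\Gamma$, I would pad it to some $\gamma' \subseteq \N \setminus i$ with $|\gamma'|=\Gamma$; every support disjoint from $\gamma'$ is automatically disjoint from $\gamma$, so it suffices to verify the inequality when $|\gamma|=\Gamma$. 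Applying the RLR criterion to $i$ and $\gamma$ produces dual codewords $\h_{1},\ldots,\h_{\zeta} \in \Cd$ with pairwise distinct supports $\lambda(\h_{j}) \in \Omega_{i}$, each disjoint from $\gamma$.

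The main (and essentially only) subtle step is to convert the combinatorial statement ``there exist $\zeta$ distinct supports'' into the numerical inequality $\sum_{\s \in \Omega_{i}:\gamma \cap \s = \emptyset} B_{\s} \geq \zeta(q-1)$. For this I would exploit the scalar-orbit structure of linear codes over $\FFq$: for every $a \in \FFq \setminus \{0\}$, the codeword $a\h_{j}$ lies in $\Cd$ and has the same support as $\h_{j}$, so $\lambdaCd(\lambda(\h_{j})) \geq q-1$. Because the $\zeta$ supports $\lambda(\h_{1}),\ldots,\lambda(\h_{\zeta})$ are pairwise distinct, these $q-1$ contributions accumulate into disjoint terms in the sum defining $B_{\s}$ over $\s \in \Omega_{i}$ avoiding $\gamma$, yielding the required $\zeta(q-1)$ lower bound. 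Once this bookkeeping is done, every constraint of \eqref{Eq:thm1} has been verified and the proof concludes by comparing objective values.
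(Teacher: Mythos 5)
Your proposal is correct and follows essentially the same route as the paper: take $A_{\w}=\lambdaC(\w)$, obtain $B_{\w}=\lambdaCd(\w)$ via the MacWilliams identity, translate the GR and RLR criteria into the corresponding constraints (using the scalar multiples $a\h$, $a\neq 0$, to get $B_{\lambda(\h_j)}\geq q-1$ for each of the $\zeta$ distinct supports), and conclude that this feasible point has objective value $|\C|$. Your padding of $\gamma$ to size $\Gamma$ is a small extra care the paper leaves implicit, but it does not change the argument.
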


\begin{proof}
Let $\C$ be a $(r,\beta,\Gamma,\zeta)$ robust locally repairable code. Then, we define

\begin{align} \nonumber
\Aw & \triangleq \lambdaC(\w), \quad \w \subseteq \N \\ \nonumber
\Bw & \triangleq \lambdaCd(\w), \quad \w \subseteq \N.
\end{align}
The objective function of the optimisation problem in \eqref{Eq:thm1} is the sum of the number of all  codewords with support $\w$, which is clearly equal to the size of the code $\C$.   Since $\Aw$ and $\Bw$ are enumerator functions of the code $\C$ and $\Cd$  respectively,  $\Aw, \Bw \geq 0$ for all $\w \subseteq \N$. The constraint $A_{\varnothing}=1$ follows from the fact that the zero codeword is contained in any linear code.   The second constraint follows from \eqref{eq:duality} by substituting   $|\C|$ and $\lambdaC(\s)$ with $\sum_{\s \subseteq \N}\As$ and $\As$, respectively. The fourth constraint follows directly from GR criterion which says that there are no codewords with support of size $\leq \beta$ (i.e., the minimum distance of code is $d = \beta +1$). Finally, the last constraint follows from the RLR criterion. Since the repair groups are specified by dual code $\Cd$, this constraint bounds the number of codewords in the dual code with support of size at most $r+1$. The support of the corresponding codewords must contain the index of the failed node $i$ chosen from set $\Omega_i$. However, none of the extra failed nodes indexed in $\gamma$ can be included in these supports. As mentioned earlier, due to the fact that $\C$ is a linear code, for any nonzero $\h \in \Cd$, then $a\h \in \Cd$ for all nonzero $a \in \FFq$. Let $\w = \lambda(\h)$. Then, $B_{\w} \ge (q-1)$ (i.e., there exist at least $q-1$ codewords in dual code with support $\w$).
Consequently, 
\[
\sum_{\s \in\Omega_{i} :  \gamma \cap \s =\emptyset } \Bw \geq \zeta(q-1).
\]
In other words, this constraint guarantees that there exist at least $\zeta$ repair groups to locally repair a failed node at the presence of $|\gamma| = \Gamma$ extra failures. 
Therefore, $(\Aw, \Bw : \w \subseteq \N)$ satisfies the constraints in the maximisation problem in \eqref{Eq:thm1} and  the theorem follows.      
\end{proof}

The optimisation problem in Theorem \ref{thm:NLOP} derives an upper bound on the codebook size $|\C|$ (i.e., Code dimension $K$) of a robust linear locally repairable code with minimum distance of $d = \beta +1$ over $\FFq$ such that any failed node can be repaired by $\zeta$ repair groups of size $r$ at the presence of any $\Gamma$ extra failure.  Although the maximisation problem in \eqref{Eq:thm1} is not a linear programming problem, it can be converted to one easily with a slight manipulation as follows: 
\begin{equation} \label{eq:LPBNS}
\begin{array}{lll}
\textbf{maximize} \quad \sum_{\w \subseteq \N}\Aw \\ 
\textbf{subject to}  \\ 
\qquad \Aw \geq 0 & \quad \forall \w \subseteq \N \\ 
\qquad \sum_{\s \subseteq \N} \As \prod_{j=1}^{n} \kappa_q (s_j,w_j) \geq 0 & \quad \forall \w \subseteq \N \\ 
\qquad \Aw = 0 & \quad 1 \leq |\w| \leq \beta \\ 
\qquad A_{\varnothing} = 1 & \\ 
\qquad \displaystyle \sum_{\w \in \Omega_{i} : \gamma \cap \w =\emptyset} \left(\sum_{\s \subseteq \N} \As \prod_{j=1}^{n} \kappa_q (s_j,w_j) \right)  \\
\hspace{2cm} \geq \zeta(q-1) \sum_{\w \subseteq \N}\Aw   &  \forall i \in \N, \gamma\in\Delta_{i}. 
\end{array} 
\end{equation}


The complexity of the linear programming problem in \eqref{eq:LPBNS} will increase exponentially with the number of storage nodes $\n$. In the following, we will reduce the complexity of the Linear Programming problem in \eqref{eq:LPBNS} by exploiting the symmetries.

To this end, suppose a $(r,\beta,\Gamma,\zeta)$ linear storage code $\C$ is generated by the matrix $G$ with columns $\g_i, ~ i=1, 2, \ldots , \n$. Let $S_\N$ be the symmetric group on $\N$ whose elements are all the permutations of the elements in $\N$ which are treated as bijective functions from the set of symbols to itself. Clearly, $|S_\N|=\n!$. Let $\sigma$ be a permutation on $\N$ (i.e.,  $\sigma \in S_\N$). 
Together with the code $\C$, each permutation $\sigma$ defines a new code $\C^\sigma$ specified with the generator matrix columns $\f_i,~i=1, 2, \ldots, \n$, such that for all $i \in \N$, $\f_i=\g_{\sigma(i)}$. Since all the codewords are the linear combinations of the generator matrix rows and the permutation $\sigma$ just changes the generator matrix columns position, the permutation cannot affect the minimum distance of the code (i.e., every codeword $\c^\sigma \in \C^\sigma$ is just a permuted version of the corresponding codeword $\c \in \C$). Therefore, the code $\C^\sigma$ is still a $(r,\beta,\Gamma,\zeta)$ linear storage code and we have the following proposition.

\begin{proposition} \label{prop:2}
Suppose $(a_\w : \w \subseteq \N)$ satisfies the constraint  in the optimisation problem  \eqref{eq:LPBNS}. For any   $\sigma \in S_\N$, let 
\[
(a_{\w}^{\sigma} : \w \subseteq \N)=a_{\w}^{\sigma} = a_{\sigma(\w)},
\] 
where $\sigma(\w) \triangleq \{ \sigma(i) : i \in \w \}$. Then $(a_{\w}^{\sigma} : \w \subseteq \N)$ also satisfies the constraint in  \eqref{eq:LPBNS}.
\end{proposition}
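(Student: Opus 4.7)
The plan is to verify each constraint in \eqref{eq:LPBNS} for the relabelled sequence $(a_{\w}^{\sigma}:\w\subseteq\N)$ by reducing it to the corresponding constraint for the original sequence evaluated at a suitably permuted argument. The driving observation is that the MacWilliams kernel $\prod_{j=1}^{n}\kappa_{q}(s_{j},w_{j})$ depends only on the multiset of coordinate pairs $\{(s_{j},w_{j}):j\in\N\}$, and is therefore invariant under any simultaneous relabelling of the coordinates of $\s$ and $\w$ by $\sigma\in S_{\N}$.

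The elementary constraints transfer directly. Since $|\sigma(\w)|=|\w|$ and $\sigma(\varnothing)=\varnothing$, the non-negativity $a_{\w}^{\sigma}\geq 0$, the vanishing condition for $1\leq|\w|\leq\beta$, and the normalisation $a_{\varnothing}^{\sigma}=1$ all follow immediately from the corresponding properties of $a$. A change of summation variable $\w\mapsto\sigma(\w)$ also gives $\sum_{\w}a_{\w}^{\sigma}=\sum_{\w}a_{\w}$, so the objective is unchanged. For the MacWilliams-type inequality, I would substitute $\s'=\sigma(\s)$ in $\sum_{\s}a_{\s}^{\sigma}\prod_{j}\kappa_{q}(s_{j},w_{j})$ and then reindex the coordinate product by $j'=\sigma(j)$, using the identity $(\sigma(\w))_{j'}=w_{\sigma^{-1}(j')}$. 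The sum rewrites as
\[
\sum_{\s'\subseteq\N} a_{\s'}\prod_{j'=1}^{n}\kappa_{q}\l(s'_{j'},(\sigma(\w))_{j'}\r),
\]
which is non-negative by the original second constraint applied at the support $\sigma(\w)$.

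Finally, for the RLR constraint I would combine this kernel-reindexing with the set-level bijections $\sigma(\Omega_{i})=\Omega_{\sigma(i)}$ and $\sigma(\Delta_{i})=\Delta_{\sigma(i)}$, which hold because $\Omega_{i}$ and $\Delta_{i}$ are defined purely in terms of cardinality and containment of $i$. The left-hand side of the last constraint of \eqref{eq:LPBNS} at $(i,\gamma)$ for the permuted sequence then equals the original left-hand side at $(\sigma(i),\sigma(\gamma))$, while the right-hand side is unchanged. The main obstacle is the careful index bookkeeping in the MacWilliams step: one must track that the binary encoding of $\sigma(\w)$ obeys $(\sigma(\w))_{j}=w_{\sigma^{-1}(j)}$ so that the coordinate-wise kernel transforms as claimed. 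Once this substitution is in place, all remaining parts are routine consequences of the fact that $\sigma$ acts as a bijection on the family of supports of each fixed cardinality.
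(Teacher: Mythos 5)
Your proof is correct, and it takes a more direct route than the paper does. The paper never verifies the constraints of \eqref{eq:LPBNS} algebraically: Proposition \ref{prop:2} is justified there by the preceding code-level argument, namely that permuting the columns of the generator matrix yields another $(r,\beta,\Gamma,\zeta)$ robust locally repairable code $\C^{\sigma}$, whose support enumerators are exactly the permuted values $a_{\sigma(\w)}$; feasibility of the permuted point is then inherited from Theorem \ref{thm:NLOP}. You instead work purely at the level of the linear program, checking each constraint for $(a_{\w}^{\sigma})$ via the change of variables $\s'=\sigma(\s)$, the reindexing $j'=\sigma(j)$ with $(\sigma(\w))_{j'}=w_{\sigma^{-1}(j')}$ (so the MacWilliams kernel product is invariant under simultaneous relabelling), and the bijections $\sigma(\Omega_i)=\Omega_{\sigma(i)}$, $\sigma(\Delta_i)=\Delta_{\sigma(i)}$, which map the RLR constraint at $(i,\gamma)$ to the original constraint at $(\sigma(i),\sigma(\gamma))$. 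The paper's argument is shorter and more intuitive, but strictly speaking it only covers feasible points that arise as enumerators of actual codes, whereas the proposition (and the averaging in the subsequent corollary) concerns arbitrary feasible points of \eqref{eq:LPBNS}; your verification applies to any feasible point and is therefore the more complete justification, at the cost of the index bookkeeping you carried out.
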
 


\begin{corollary}
Let
\[
a_{\w}^{*} = \frac{1}{|S_\N|} \sum_{\sigma \in S_\N}a_{\w}^{\sigma}.
\]
Then $(a_{\w}^{*} : \w \subseteq \N)$ satisfies the constraint  in \eqref{eq:LPBNS} and
\[
\sum_{\w \subseteq \N}a_{\w}^{*} = \sum_{\w \subseteq \N}a_{\w}.
\]
\end{corollary}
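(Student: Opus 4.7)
The plan is to deduce the corollary as a direct consequence of Proposition \ref{prop:2} combined with the observation that the feasible region of \eqref{eq:LPBNS} is convex, since every constraint in the problem is linear (or affine) in the variables $(a_{\w} : \w \subseteq \N)$.

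First, I would note that by Proposition \ref{prop:2}, for every $\sigma \in S_{\N}$ the vector $(a_{\w}^{\sigma} : \w \subseteq \N)$ lies in the feasible region of \eqref{eq:LPBNS}. The tuple $(a_{\w}^{*} : \w \subseteq \N)$ is by definition the uniform average of these $|S_{\N}|$ feasible points, hence a convex combination of them. Since the non-negativity constraints, the MacWilliams-type constraints, the vanishing constraints $a_{\w}=0$ for $1\le|\w|\le\beta$, the normalisation $a_{\varnothing}=1$, and the robust local recovery inequalities are all of the form ``$\sum_{\s} c_{\s}\, a_{\s} \ge 0$'' or ``$\sum_{\s} c_{\s}\, a_{\s}=c$'' for fixed constants depending only on $q$, $n$, $r$, $\zeta$, and the indices $i,\gamma$, the feasible region is the intersection of finitely many half-spaces and hyperplanes, which is convex. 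Therefore $(a_{\w}^{*})$ satisfies every constraint of \eqref{eq:LPBNS}.

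Next, I would verify that the objective value is unchanged. For each fixed $\sigma \in S_{\N}$, the map $\w \mapsto \sigma(\w)$ is a bijection from $2^{\N}$ onto itself, so reindexing gives
\[
\sum_{\w \subseteq \N} a_{\w}^{\sigma} \;=\; \sum_{\w \subseteq \N} a_{\sigma(\w)} \;=\; \sum_{\w' \subseteq \N} a_{\w'}.
\]
Averaging over $\sigma \in S_{\N}$ therefore preserves this sum, yielding $\sum_{\w} a_{\w}^{*} = \sum_{\w} a_{\w}$, which is exactly the second claim.

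There is really no principal obstacle here; the only point that might look delicate is the normalisation $a_{\varnothing}^{*}=1$, but it is immediate because $\sigma(\varnothing)=\varnothing$ for every permutation, so $a_{\varnothing}^{\sigma}=a_{\varnothing}=1$ for all $\sigma$, and the average equals $1$. The upshot of the corollary is operational rather than technical: it lets one restrict the search in \eqref{eq:LPBNS} to $S_{\N}$-invariant vectors $(a_{\w})$ (i.e.\ vectors depending only on $|\w|$) without loss in the optimum, which is precisely what enables the complexity reduction announced before the statement.
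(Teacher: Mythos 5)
Your proposal is correct and follows essentially the same route as the paper: invoke Proposition \ref{prop:2} to get feasibility of each permuted solution, use linearity/convexity of the constraints of \eqref{eq:LPBNS} to conclude the average is feasible, and preserve the objective by reindexing. The only difference is that you spell out the bijection argument for $\sum_{\w} a_{\w}^{*} = \sum_{\w} a_{\w}$, which the paper leaves implicit with ``the result then follows.''
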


\begin{proof}
From  Proposition \ref{prop:2},  for any feasible solution $(a_\w : \w \subseteq \N)$ in \eqref{eq:LPBNS}, there exists $|S_\N|$ feasible solutions $(a_{\w}^{\sigma(i)} : \w \subseteq \N, \text{ and } \sigma^{(i)} \in S_\N)$. Since \eqref{eq:LPBNS} is a linear programming problem,  $a_{\w}^{*}$ (the average of all feasible solutions) is also a feasible solution.  
The result then follows.
\end{proof}

Proposition \ref{prop:2} can be used to reduce  the complexity in solving the optimisation problem  \eqref{eq:LPBNS}. 

\begin{proposition} \label{prop:3}
If $|\w| = |\s|$, then
$
a_{\w}^{*} = a_{\s}^{*}
$.
\end{proposition}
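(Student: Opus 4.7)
The plan is to exploit the fact that the symmetric group $S_\N$ acts transitively on the subsets of $\N$ of a fixed cardinality, and that the averaging defining $a_\w^*$ is invariant under this action.

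First I would observe that if $|\w| = |\s|$, then there exists a permutation $\pi \in S_\N$ such that $\pi(\w) = \s$: simply choose any bijection between $\w$ and $\s$ and extend it arbitrarily to a bijection on $\N$. This is the only nontrivial group-theoretic input needed.

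Next, I would substitute into the definition of $a_\s^*$:
\[
a_{\s}^{*} = \frac{1}{|S_\N|} \sum_{\sigma \in S_\N} a_{\sigma(\s)} = \frac{1}{|S_\N|} \sum_{\sigma \in S_\N} a_{\sigma(\pi(\w))} = \frac{1}{|S_\N|} \sum_{\sigma \in S_\N} a_{(\sigma \pi)(\w)}.
\]
Now perform the change of summation index $\tau = \sigma \pi$. Since right-multiplication by the fixed element $\pi$ is a bijection $S_\N \to S_\N$, the sum over $\sigma$ is equal to the sum over $\tau$, yielding
\[
a_{\s}^{*} = \frac{1}{|S_\N|} \sum_{\tau \in S_\N} a_{\tau(\w)} = a_{\w}^{*}.
\]

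There is no real obstacle here; the only step that requires any thought is verifying that the action of $S_\N$ on $2^\N$ restricted to subsets of a given size is transitive, and that the reindexing $\sigma \mapsto \sigma\pi$ is a bijection on $S_\N$. Both are standard, so the proposition reduces to a one-line calculation once the transitivity observation is made.
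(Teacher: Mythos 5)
Your proof is correct and is essentially the argument the paper has in mind: the paper dismisses the proposition with ``direct verification due to symmetry,'' and your transitivity-plus-reindexing calculation (using that $\sigma \mapsto \sigma\pi$ is a bijection of $S_\N$) is exactly that verification written out in full. Nothing is missing.
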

\begin{proof}
Direct verification due to symmetry.
\end{proof}

By Proposition \ref{prop:2}, it is sufficient to consider only ``symmetric'' feasible solution $(a_{\w}^{*} : \w \subseteq \N)$. Therefore, we can impose additional constraint 

\begin{align} \label{eq:additioncons}
\Aw = \As, \quad \forall |\w| = |\s|.
\end{align}
to \eqref{eq:LPBNS} without affecting the bound. 
%
These equality constraint  will  significantly reduce the number of variables in the optimisation problem. Specifically, we have the following theorem.

\begin{theorem} \label{thm:SimpRLRC}
Consider a $(r,\beta,\Gamma,\zeta)$ robust locally repairable code $\C$. Then, $|\C|$ is upper bounded by the optimal value in the following maximisation problem
%

\begin{align} 
&\textbf{Maximize}   \quad \sum_{t=0 }^{\n} { \n \choose t} a_{t}  \nonumber \\
&\textbf{Subject to}  \nonumber \\ 
&\begin{cases}
a_{t}  \ge 0 ,  &  \quad\forall  t =0, \ldots, \n \\ 
b_{t} = \sum_{i=0}^{t}\sum_{j=0}^{\n-t}{t \choose i}{\n-t \choose j}a_{i+j} (-1)^{i}(q-1)^{t-i} & \quad \forall  t =0, \ldots, \n\\
\displaystyle  
b_{t} \ge 0,  &\quad\forall  t =0, \ldots, \n\\
\sum_{t=1}^{\beta} a_{t}  = 0 &  \\
 a_{0} = 1  &  \\
\displaystyle \sum_{t=1}^{r}  {\n-1-\Gamma  \choose t} b_{t+1} \ge \zeta (q-1) \sum_{t=0}^{\n} {\n \choose t} a_{t}. 
  \end{cases} 
  \label{Eq:linprog2}
\end{align}
\end{theorem}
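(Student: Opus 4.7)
The plan is to derive Theorem \ref{thm:SimpRLRC} from the already-established LP \eqref{eq:LPBNS} by invoking the symmetry reduction of Proposition \ref{prop:2} and then performing the combinatorial bookkeeping that collapses each subset-indexed sum to a weight-indexed one. By the corollary following Proposition \ref{prop:2}, we may adjoin the equality constraint \eqref{eq:additioncons} to \eqref{eq:LPBNS} without changing its optimal value, so every feasible solution may be taken to depend on $\w$ only through $|\w|$. Introduce $a_t \triangleq A_\w$ for any $\w$ with $|\w| = t$. Since there are $\binom{\n}{t}$ subsets of size $t$, the objective $\sum_{\w \subseteq \N} A_\w$ rewrites as $\sum_{t=0}^\n \binom{\n}{t} a_t$, the constraint $A_\varnothing = 1$ becomes $a_0 = 1$, the constraint $A_\w = 0$ for $1 \le |\w| \le \beta$ becomes $\sum_{t=1}^\beta a_t = 0$ in the presence of $a_t \ge 0$, and non-negativity of $A_\w$ becomes $a_t \ge 0$.

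Next I would translate the MacWilliams sum appearing in \eqref{eq:LPBNS}. Fix any $\w$ of weight $t$. The key observation is that $\prod_{k=1}^{\n} \kappa_q(s_k, w_k)$ depends on $\s$ only through the two cardinalities $i = |\s \cap \w|$ and $j = |\s \setminus \w|$: the $\n - t$ coordinates with $w_k = 0$ each contribute a factor of $1$; the $t - i$ coordinates with $w_k = 1,\, s_k = 0$ each contribute $q - 1$; and the $i$ coordinates with $w_k = s_k = 1$ each contribute $-1$. Hence the product equals $(-1)^i (q-1)^{t-i}$. There are $\binom{t}{i}\binom{\n-t}{j}$ subsets $\s$ realising the pair $(i, j)$, and each contributes $A_\s = a_{i+j}$. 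Aggregating gives exactly the stated formula
\[
b_t = \sum_{i=0}^{t} \sum_{j=0}^{\n-t} \binom{t}{i}\binom{\n-t}{j} a_{i+j} (-1)^i (q-1)^{t-i},
\]
and the non-negativity constraint in \eqref{eq:LPBNS} then reads $b_t \ge 0$.

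Finally I would translate the RLR constraint. For fixed $i \in \N$ and $\gamma \in \Delta_i$, the left-hand side in \eqref{eq:LPBNS} is, under the symmetry, $\sum_{\w \in \Omega_i : \gamma \cap \w = \emptyset} b_{|\w|}$. Grouping these $\w$ by weight $t'$, each such $\w$ is determined by a choice of $t' - 1$ further coordinates from $\N \setminus (\{i\} \cup \gamma)$, a set of size $\n - 1 - \Gamma$; hence the number of $\w$ of weight $t'$ in the collection is $\binom{\n-1-\Gamma}{t'-1}$. After the reindex $t = t' - 1$ the inequality becomes $\sum_{t=0}^{r} \binom{\n-1-\Gamma}{t} b_{t+1} \ge \zeta(q-1)\sum_{t=0}^{\n}\binom{\n}{t} a_t$. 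The $t = 0$ term corresponds to a weight-$1$ dual codeword, i.e.\ a degenerate ``repair'' of $X_i$ by itself that forces the $i$-th coordinate of every codeword to vanish; puncturing that coordinate reduces to a non-degenerate code in which $b_1 = 0$, so that term may be dropped without loss, giving the sum from $t = 1$ to $r$ as stated.

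The main obstacle is the combinatorial step in the middle paragraph: one must verify carefully that the MacWilliams product $\prod_k \kappa_q(s_k, w_k)$ factors through the pair $(|\s \cap \w|, |\s \setminus \w|)$ so that the subset sum over $\s$ regroups cleanly into the double sum over $(i, j)$ appearing in the formula for $b_t$. Everything else is either direct substitution into \eqref{eq:LPBNS} or an immediate consequence of the symmetry reduction, so once that identity is in place the derivation is essentially bookkeeping.
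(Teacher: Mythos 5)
Your derivation follows the paper's own route: symmetrise \eqref{eq:LPBNS} via Proposition \ref{prop:2} and Proposition \ref{prop:3}, pass to weight-indexed variables $a_t,b_t$, evaluate $\prod_{j}\kappa_q(s_j,w_j)=(-1)^{i}(q-1)^{t-i}$ on the classes with $|\s\cap\w|=i$, $|\s\setminus\w|=j$, of which there are ${t \choose i}{\n-t \choose j}$, and count the admissible dual supports in the RLR constraint by choosing the remaining coordinates from the $\n-1-\Gamma$ nodes outside $\{i\}\cup\gamma$; this is exactly the proof given for Theorem \ref{thm:SimpRLRC}, and that bookkeeping is correct. The one place you go beyond the paper is the $t=0$ term: a faithful translation of the last constraint of \eqref{eq:LPBNS} indeed yields $\sum_{t=0}^{r}{\n-1-\Gamma \choose t}b_{t+1}$ on the left, the paper silently starts the sum at $t=1$, and you flag the discrepancy and invoke puncturing. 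Your instinct is right, but the puncturing argument as stated does not close the gap: puncturing the offending coordinate produces a code of length $\n-1$, so it says nothing about feasibility of the original length-$\n$ enumerator in the LP of \eqref{Eq:linprog2}. The clean justification is that a weight-one dual codeword supported on $\{i\}$ exists iff the $i$-th generator column $\g_i$ is zero, i.e.\ node $i$ stores nothing; for any storage code with no identically-zero coordinate one has $B_{\{i\}}=0$ for every $i$, hence $b_1=0$ after symmetrisation, so the constraints with and without the $t=0$ term coincide and feasibility follows. If degenerate codes are to be admitted, the printed constraint needs either the $t=0$ term restored or an explicit non-degeneracy assumption; this is a minor gap in the statement (shared by the paper's own proof, which passes over it without comment) rather than a flaw in your argument.
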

\begin{proof}
Due to Proposition \ref{prop:3}, we can simplify the LP in \eqref{eq:LPBNS}  by  rewriting the variables as follows:

\begin{align*}
a_{t}  & = A_{\w} \quad \text{ for } t= |\w| \\
b_{t}  & = B_{\w}  \left( \sum_{\s \subseteq \N} A_{\s} \right) \quad \text{ for } t= |\w|.  
\end{align*}
Using the new variables,   the objective function (i.e., the number of codewords in code $\C$) is reduced to sum of the number of codewords with a support of size $t$ (i.e., $a_t$) multiplied by the number of subsets with size $t$ (i.e., $\n \choose t$) on all $t = 0, \ldots, \n$. 

The first constraint follows by the fact that the number of codewords with support of size $t$ is not negative. 
For any $\w \subseteq \N$ with the size of $t$, the second constraint corresponds to that the number of codewords in dual code $\Cd$ is nonnegative. 

To rewrite the second constraint, notice that 
\begin{align}\label{eqMW}
\prod_{j=1}^{\n} \kappa (s_j,w_j) = (-1)^{|\s \cap \w|} (q-1)^{|\w - \s|}
\end{align}
for any pair $(\s,\w)$ such that  $|\s \cap \w| = i$ and $|\w - \s| = t-i$.
In addition, the number of $\s$ such that $|\s \cap \w| = i$ and $|\w - \s| = t-i$ is equal to 
\[
{t \choose i}{\n-t \choose j}.
\]
%
%
%
%
%
%
The third constraint follows by the fact that the number of codewords in dual code $\Cd$ with support of size $t$ is not negative. The forth constraint follows from the GR criterion and is equivalent to the third constraint in \eqref{eq:LPBNS}. According to the GR criterion, there exist no codewords with the support size $\le \beta$ (i.e., $a_t = 0, ~1 \le t \le \beta$). Since $(a_t \ge 0, ~ \forall t = 0, \ldots, \n)$, then $\sum_{t=1}^{\beta} a_{t}  = 0$. The constraint $a_0 = 1$ is equivalent to the fourth constraint in \eqref{eq:LPBNS} which states that there exists only one codeword in code $\C$ with a support of size $t=0$ (i.e., zero codeword). The last constraint is equivalent to the last constraint in \eqref{eq:LPBNS} which follows from the RLR criterion where 
there are $\n-1-\Gamma \choose t$ subsets (i.e., supports) of size $t$ chosen from  $(\n-1-\Gamma)$ survived nodes.
\end{proof}

\begin{remark}
The number of variables and constraint now scales only linearly with $\n$ (the number of nodes in the network).
\end{remark}

\section{Cost for Update} \label{Sec:Update}

In previous sections,  criteria evaluating performance of storage codes include 1)  storage efficiency, 2) local recovery capability and 3) global recovery capability. However, data are rarely static but will be updated from time to time. The fundamental question is how efficient a scheme can be?

\subsection{Update Model and Code Definition}
To answer this question, we assume that the data $\M$ consists of multiple files of identical size. In our previous model, we assume that each file will be encoded using the same storage code. Therefore, to update each file, one needs make request to each individual storage node to update the content (regarding that updated file) stored there, leading to excessive overheads.

In this section, we consider a variation of our scheme in the sense that we treat each systematic codeword symbol as a single file (which can be updated individually). The whole collection of files will still be robust and efficient as in the original model. However, one can now consider the problem of update as a single systematic codeword symbol and evaluate how many parity-check codeword symbols are required to be updated.

\begin{definition}[Update-efficient storage code]
A $\k$-symbol update-efficient storage code is specified by a $\k\times (\n+\k)$ generator matrix ${\bf G}$ with columns $(\bg_{1} , \ldots, \bg_{\n+\k} )$ which satisfy the following criteria:
\begin{enumerate}
\item \label{item:permute}
\emph{Pseudo-Systematic --}
The submatrix formed by the columns  $(\bg_{\n+1} , \ldots, \bg_{\n+\k}) $  is
an identity matrix.

\item \label{item:rank}
\emph{Full Rank:}
The matrix formed by $(\bg_{1}, \ldots, \bg_{\n})$ is full rank (i.e., of rank equal to $\k$). 
\end{enumerate}
\end{definition}

The first $\n$ columns correspond to the "coded symbols'' stored in the $\n$ storage nodes, while the last $\k$ columns correspond to the $\k$ source symbols. Let $[u_{1}, \ldots, u_{\k}]$ be  the $\k$  source symbols. In our context, each source symbol may correspond on a file which can be separately updated. 
Then, any codeword $\c=[c_1,\ldots,c_{\n},c_{\n+1},\ldots,c_{\n+\k}]$ in the update efficient storage code $\C$ is generated by
\[
c_i=\sum_{j=1}^{\k} u_{j} g_{j,i}, \qquad \forall i=1,\ldots,\n+\k,
\] 
where $\bg_{i} = [g_{1,i} , \ldots, g_{\k,i} ]^{\top}$. However, only the first $\n$ coordinates of the codeword $\c$ will be stored in the network. Therefore, for any $i \in \N$ the content of the storage node $X_i$, is the coded symbol $c_i$. In particular,

\begin{align}\label{eq:UEcode}
X_i=\sum_{j=1}^{\k} u_{j} g_{j,i}, \qquad \forall i=1,\ldots, \n.
\end{align}

For any vector $\v = [v_1,\ldots,v_{\n+\k}]$, we will often rewrite it as $\v=[\v_1,\v_2]$ where 

$$
\v_1=[v_1,\ldots,v_{\n}] \text{ and } \v_2=[v_{\n+1},\ldots,v_{\n+\k}].
$$

\begin{definition}[support]
For any vector $\v=[\v_1,\v_2]$ where 

$$
\v_1=[v_1,\ldots,v_{\n}] \text{ and } \v_2=[v_{\n+1},\ldots,v_{\n+\k}],
$$
its support  $\lambda(\v)$ is defined as  two subsets  $(\w_1 \subseteq \N, \w_2\subseteq \K)$ where 

\begin{align*}
i \in \w_1 &\text{ if }   v_{i} \neq 0  \\
i \in \w_2 &\text{ if }   v_{i+\n} \neq 0,
\end{align*}
and $\K=\{1,\ldots,\k\}.$
As before, it is also instrumental to view $\w_{1}$ and $\w_{2}$ as binary vectors 
$\w_1=[w_{1,1},\ldots,w_{1,N}]$ and $\w_2=[w_{2,1},\ldots,w_{2,\k}]$ such that 
$w_{1,i} = 1$ if $ v_{i} \neq 0$ and $w_{2,i} = 1$ if $ v_{i+\n} \neq 0$.
\end{definition}

\begin{figure}[t]
\centering
\includegraphics[width=.5\textwidth]{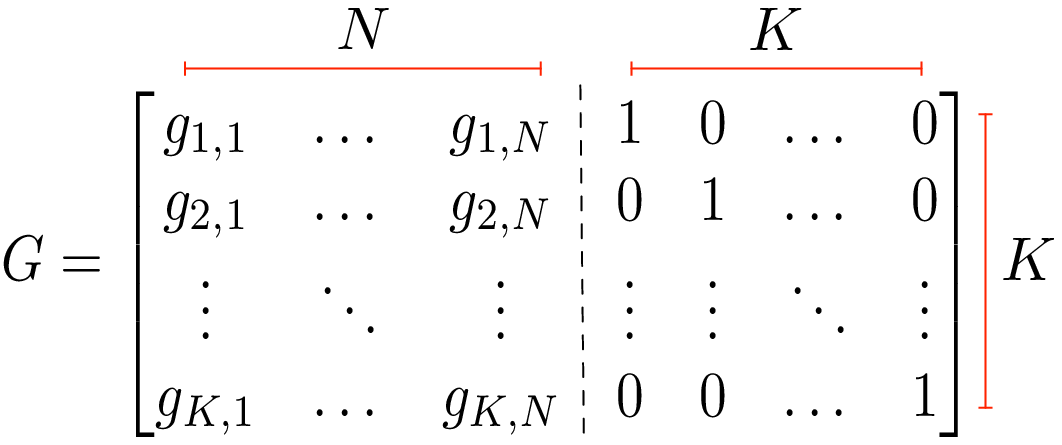}
\caption{Generator matrix of an update efficient storage code.}
\label{fig:UEG}
\end{figure}
\begin{figure}[t]
\centering
\includegraphics[width=.7\textwidth]{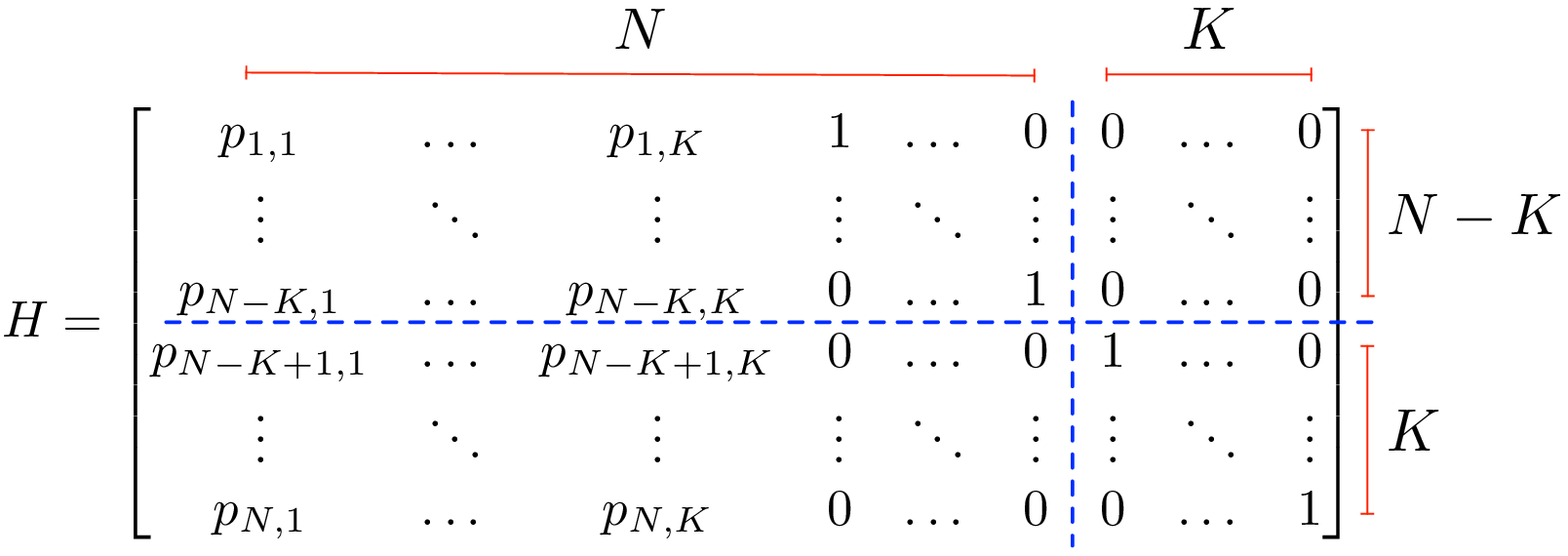}
\caption{Parity check matrix of an update efficient storage code.}
\label{fig:UEH}
\end{figure}

\begin{lemma}
Let $G$ be the generator matrix of an update efficient storage code, as depicted in Figure \ref{fig:UEG}. Then 
\begin{enumerate}

\item 
For any $i=1,\ldots, \k$, there exists a dual codeword $\h = (\h_{1}, \h_{2})$ such that all entries of $h_{2,j}$ is non-zero if and only if $j\neq i$. 

\item The rank of the last $\k$ column of the parity check matrix must be $\k$.

\item  If $\h = (\h_{1}, \h_{2})$ is a dual codeword, and $\h_{2}$ is non-zero, then $\h_{1}$ must be non-zero as well.

\item  If $\c = (\c_{1}, \c_{2})$ is a codeword, and $\c_{2}$ is non-zero, then $\c_{1}$ must be non-zero as well.

\end{enumerate}

\end{lemma}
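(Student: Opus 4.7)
The plan is to read off the shape of codewords and dual codewords from the pseudo-systematic form of $G$ and then verify all four claims from a single pair of identities. Since the last $\k$ columns of $G$ form $I_{\k}$, any codeword decomposes as $\c = \u G = (\u G_{1}, \u)$, so $\c_{1} = \u G_{1}$ and $\c_{2} = \u$. Dually, every $\h = (\h_{1}, \h_{2}) \in \Cd$ must satisfy $G \h^{\top} = \0$, which expands to $G_{1} \h_{1}^{\top} + \h_{2}^{\top} = \0$, i.e.\ $\h_{2} = -\h_{1} G_{1}^{\top}$. These two relations, combined with the full-row-rank hypothesis $\text{rank}(G_{1}) = \k$, drive everything.

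Parts 3 and 4 are immediate. For Part 4, $\c_{2} \neq 0$ forces $\u = \c_{2} \neq 0$; since $G_{1}$ has linearly independent rows, the map $\u \mapsto \u G_{1}$ is injective, so $\c_{1} = \u G_{1} \neq 0$. For Part 3, $\h_{1} = 0$ forces $\h_{2} = -\h_{1} G_{1}^{\top} = 0$ by the dual relation, and contraposing yields the stated implication.

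For Part 2, I would exhibit a canonical parity check matrix, $H = [I_{\n} \mid -G_{1}^{\top}]$. A direct check gives $G H^{\top} = G_{1} - G_{1} = \0$ and $\text{rank}(H) = \n$, so this $H$ is a valid parity check matrix. Its last $\k$ columns form $-G_{1}^{\top}$, whose rank equals $\text{rank}(G_{1}) = \k$. Since any other parity check matrix is related to this one by left-multiplication by an invertible matrix, the rank of the last $\k$ columns is preserved, and the statement holds independently of the choice of $H$.

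Part 1 is the only part with a potential pitfall. A tempting but wasteful approach is to search for $\h_{1}$ inside the hyperplane orthogonal to the $i$-th row of $G_{1}$ while simultaneously avoiding the $\k - 1$ other hyperplanes; a naive union bound there would impose a small-field restriction that the hypothesis does not provide. The cleaner route is to prescribe $\h_{2}$ first: pick any $\v \in \FFq^{\k}$ with $v_{i} = 0$ and $v_{j} \neq 0$ for all $j \neq i$ (for instance $v_{j} = 1$ for $j \neq i$), and then solve $\h_{1} G_{1}^{\top} = -\v$ for $\h_{1}$. Because $G_{1}^{\top}$ has rank $\k$, the linear map $\h_{1} \mapsto \h_{1} G_{1}^{\top}$ from $\FFq^{\n}$ to $\FFq^{\k}$ is surjective, so such an $\h_{1}$ exists, and $(\h_{1}, \v)$ is the required dual codeword realising the prescribed support pattern on its systematic part.
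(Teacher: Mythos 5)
Your proof is correct and follows essentially the same route as the paper's: both rest on the pseudo-systematic decomposition $G=[G_{1}\mid I_{\k}]$, the full row rank of $G_{1}$, and the duality relation $\h_{2}=-\h_{1}G_{1}^{\top}$ (surjectivity of $\h_{1}\mapsto \h_{1}G_{1}^{\top}$ giving Part 1, and injectivity of $\u\mapsto \u G_{1}$ giving Part 4). The only cosmetic difference is that for Part 2 you exhibit the canonical parity check matrix $[I_{\n}\mid -G_{1}^{\top}]$ and invoke invariance of column rank under row operations, rather than deducing Part 2 from Part 1 as the paper does, which if anything makes that step more explicit.
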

\begin{proof}
Rank of $G$ is clearly $\k$ and hence  rank of $H$ must be $\n$. By construction, the first $\n$ columns of $G$ has rank $\k$. This implies 1), which in turns also implies 2).  Next,  3) follows from that the last $\k$ columns of $G$ has rank $\k$. Finally, 4) follows from that the matrix formed by $(\bg_{1}, \ldots, \bg_{\n})$ is of rank $\k$.
\end{proof}

 \begin{remark}
Assume without loss of generality that the last $\n$ columns of the generator matrix $G$ has rank $\n$. We can assume that the parity check matrix $H$ of the update-efficient storage code can be  taken the form in Figure \ref{fig:UEH} such that for any $1 \le \ell \le \k$, the $\n-\k+\ell^{th}$ row of $H$ has at least two non-zero entries. In other words, entries $ p_{\n-\k+\ell, 1} ,  \ldots, p_{\n-\k+\ell, \k}$ cannot be all zero.   
 \end{remark}

\begin{proposition}[Code Properties]\label{prop4}
Let $\C$ be a $\k$-symbol update efficient locally repairable code. Then 

\medskip
\begin{align}\label{eqprop4}
\begin{cases}
(\ref{eqprop4}.a)  \quad  \sum_{\w_{2}\subseteq \K : |\w_{2}| \ge 1} \support_{\C}({\bf 0} , \w_2) = 0 \\ 
(\ref{eqprop4}.b)  \quad  \sum_{\w_{2}\subseteq \K : |\w_{2}| \ge 1} \support_{\C^{\perp}}({\bf 0} , \w_2) = 0     \\
(\ref{eqprop4}.c)  \quad \sum_{\w_{1}\subseteq \N  } \support_{\C^{\perp}} ( \w_{1},\w_2) = (q-1)^{|\w_{2}|}q^{\n-\k}, & \forall \w_2 \subseteq \K, |\w_2|\ge 0 \\
(\ref{eqprop4}.d)  \quad\sum_{\w_{1}\subseteq \N  } \support_{\C } ( \w_{1},\w_2) = (q-1)^{|\w_{2}|} , & \forall \w_2 \subseteq \K, |\w_2|\ge 0 \\
(\ref{eqprop4}.e)  \quad \support_{\C}({\bf 0} , {\bf 0} ) = \support_{\C^{\perp}}({\bf 0} , {\bf 0} ) = 1 
\end{cases}
\end{align}
\end{proposition}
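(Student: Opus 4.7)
The plan is to dispatch each of the five identities using elementary linear algebra applied to the structural properties of the generator matrix $G$ and the parity check matrix $H$ established in the preceding Lemma. Recall that $\support_{\C}(\w_1,\w_2)$ and $\support_{\C^\perp}(\w_1,\w_2)$ simply count the (dual) codewords whose first $\n$ coordinates have support $\w_1$ and whose last $\k$ coordinates have support $\w_2$.

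For part (a), every codeword has the form $\c = \u G$ for some $\u \in \FFqk$, and writing $G = [G_1 \mid I_\k]$ by the Pseudo-Systematic property gives $\c_1 = \u G_1$ and $\c_2 = \u$. Since $G_1$ has rank $\k$ by the Full Rank property, the map $\u \mapsto \u G_1$ is injective, so $\c_1 = \mathbf{0}$ forces $\u = \mathbf{0}$ and hence $\c_2 = \mathbf{0}$; thus no nonzero codeword has $\w_1 = \mathbf{0}$. Part (b) is immediate from item 3 of the Lemma: $\h_1 = \mathbf{0}$ forces $\h_2 = \mathbf{0}$, so no nonzero dual codeword contributes. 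Part (e) is the standard observation that the zero vector is the unique element of empty support in any linear code.

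The substance is in parts (c) and (d). Consider the projection $\pi_2 : \FFq^{\n+\k} \to \FFqk$ sending $(\v_1,\v_2)\mapsto \v_2$. Restricted to $\C$, the map $\pi_2$ is a bijection onto $\FFqk$ because $\pi_2(\u G) = \u$; so for each fixed $\w_2$ the number of codewords with second part supported exactly on $\w_2$ equals the number of $\FFqk$-vectors with that support, namely $(q-1)^{|\w_2|}$, giving (d). For (c), I would observe that the image of $\pi_2$ restricted to $\C^\perp$ is spanned by the last $\k$ columns of $H$, which have rank $\k$ by item 2 of the Lemma; hence $\pi_2|_{\C^\perp}$ is surjective onto $\FFqk$. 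Since $\dim_{\FFq}\C^\perp = \n$, rank--nullity yields a kernel of size $q^{\n-\k}$, so each support pattern $\w_2$ has exactly $(q-1)^{|\w_2|}\,q^{\n-\k}$ preimages in $\C^\perp$.

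There is no serious obstacle here; the proposition is a direct bookkeeping consequence of the Lemma. The only subtlety worth flagging is keeping the two halves of the support separated and applying each rank statement in the correct direction---namely, injectivity of $\pi_2$ on $\C$ versus surjectivity of $\pi_2$ on $\C^\perp$---after which the counts reduce to the standard enumeration of vectors with a prescribed support pattern.
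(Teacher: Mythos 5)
Your proof is correct and follows essentially the same route as the paper, which simply asserts ``direct verification'' from the structure $G=[G_1\mid I_\k]$ of the generator matrix and the corresponding parity check matrix; your argument just spells out that verification, using injectivity of $\u\mapsto\u G_1$ for (a) and (d), the relation $\h_2^\top=-G_1\h_1^\top$ (equivalently item 3 of the Lemma) for (b), and surjectivity of the projection of $\C^{\perp}$ onto the last $\k$ coordinates plus rank--nullity for (c). The fibre-counting details you supply (each $\w_2$-pattern having $(q-1)^{|\w_2|}$ representatives, each with $q^{\n-\k}$ dual preimages) are exactly what the paper leaves implicit, so there is no gap and no genuinely different method.
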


\begin{proof}
Direct verification from the generator matrix and the parity check matrix in Figures \ref{fig:UEG} and \ref{fig:UEH}, respectively.
\end{proof}

%
%

\begin{definition}[Update efficient locally repairable storage code] \label{def:UELRC}
A $\k-symbol$ update efficient locally repairable code $\C$  is a $(r,\beta,\delta)$ storage code  if satisfies the following criteria:
\begin{enumerate}
\item \emph{Local Recovery (LR):}
for any $i\in \{1, \ldots, \n\}$, there exists ${\bf h}=[\h_1,\h_2] \in \C^{\perp}$ where $\h_1=[h_1,\ldots,h_{\n}]$ and $\h_2=[h_{\n+1},\ldots,h_{\n+\k}]$ such that $i\in \lambda({\bf h}_1)$, $\lambda(\h_2) =\emptyset$, and $|\lambda(\h_1)|-1 \le r$.  

\item \emph{Global Recovery (GR):} $\support_{\C}(\w_1,\w_2) = 0$, for all $\w_1 \subseteq \N$ such that $1 \le |\w_1| \le  \beta$ and $\w_2 \subseteq \K$.

\item \emph{Efficient Update (EU):} For any $\w_2 \subseteq \K$ and $|\w_2|=1$, 
\begin{align*}
\support_{\C}(\w_1,\w_2)= 0, \qquad \forall\w_1 \subseteq \N \text{~and~} |\w_1| > \delta.
\end{align*}
%
\end{enumerate} 
\end{definition}

\begin{remark}
It is worth to notice that these parameters may depend on each other. For example, it can be directly verified that  there is no $(r,\beta,\delta)$ storage code where $\beta \ge \delta$.

The LR criterion guarantees that for any failed node $X_i$ there exists a local repair group specified by the support of the first $\n$ symbols of the corresponding dual codeword. In other words, while
\[
\h\c^\top=\sum_{i=1}^{\n+\k}h_ic_i=0,
\]  
only $(h_1,\ldots,h_{\n})$ and $(c_1,\ldots,c_{\n})$ are participating in the repair process such that
\[
\sum_{i=1}^{\n}h_iX_i=0.
\]  
Therefore, the support of $\h_1=[h_1,\ldots,h_{\n}]$ specifies the repair groups while $\h_2$ must be a zero vector. The GR criterion guarantees that the stored data is recoverable at the presence of severe failure pattern. More specifically,  in the worst case of  $\beta$ simultaneous node failures, data stored in the network can still be repaired from the survived nodes, i.e., minimum distance of the code is $d=\beta+1$.
Finally, the EU criterion ensures that if any of the source symbols is modified, then no more than $\delta$ storage nodes need to be updated. In particular, for any singleton $\w_{2}$, there exists a unique $\w_{1}$ such that 
\[
\support_{\C}(\w_1,\w_2) = \sum_{\w'_{1}}\support_{\C}(\w'_1,\w_2) = q-1.
\] 
The $\w_{1}$ corresponds to  codewords where only the single data file (indexed by the support of $\w_{2}$) is non-zero.  
In that case, if the data file has been changed, only the codeword symbols corresponding to the support of $\w_{1}$ need to be updated.
%
%
%
%
 In order to maintain the update cost efficiency (i.e., equal or less than $\delta$ nodes per any source symbol change), the code $\C$ must not include any codeword with support $|\w_1|>\delta$ for any $|\w_2|=1$.

\begin{proposition}\label{prop5}
For a $(r,\beta,\delta)$ update efficient locally repairable code $\C$, we have 

\medskip
\begin{align}\label{propeq5}
\begin{cases}
(\ref{propeq5}.a) \quad 
\sum_{\w_{1}\subseteq \N : \ell \in \w_{1} \text{ and }|\w_{1}| \le r+1} \support_{\C^{\perp}} ( \w_{1},\emptyset) \ge (q-1),
\qquad \qquad \qquad \forall \ell \in\N  \\ 
(\ref{propeq5}.b) \quad 
\support_{\C} (\w_{1},\w_{2})   =0,  
\qquad \qquad \qquad \qquad \forall \w_1 \subseteq \N, \w_2 \subseteq \Kp, 1 \le |\w_{1}| \le \beta  \\%
(\ref{propeq5}.c) \quad 
\support_{\C}(\w_1,\w_2)= 0, 
\qquad \qquad \qquad \qquad \forall\w_1 \subseteq \N \text{~and~} |\w_1| > \delta \text{~and~} |\w_{2}|=1
%
\end{cases}
\end{align}
\end{proposition}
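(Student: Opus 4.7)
The plan is to observe that the three claims in Proposition \ref{prop5} are essentially direct translations of the three criteria (LR, GR, EU) from Definition \ref{def:UELRC} into the language of support enumerators. Parts (b) and (c) are immediate restatements, while (a) requires one short additional argument exploiting the linearity of $\C^{\perp}$. There is no substantial obstacle; the work consists of bookkeeping and a scalar-multiplication argument.

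For (b), the GR criterion asserts $\support_{\C}(\w_1,\w_2) = 0$ for every $\w_1 \subseteq \N$ with $1 \le |\w_1| \le \beta$ and every $\w_2 \subseteq \K$. This is precisely the statement of (b), so nothing further is needed. Similarly, for (c), the EU criterion asserts $\support_{\C}(\w_1,\w_2) = 0$ whenever $|\w_2|=1$ and $|\w_1| > \delta$, which is exactly (c).

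For (a), I would invoke the LR criterion: for any $\ell \in \N$, there exists a dual codeword $\h=(\h_1,\h_2) \in \C^{\perp}$ with $\ell \in \lambda(\h_1)$, $\lambda(\h_2)=\emptyset$, and $|\lambda(\h_1)| \le r+1$. Let $\w_1^{*} = \lambda(\h_1)$. Since $\C^{\perp}$ is a linear subspace over $\FFq$, for every nonzero $a \in \FFq$ the vector $a\h$ also lies in $\C^{\perp}$, and it has exactly the same support $(\w_1^{*}, \emptyset)$ as $\h$. This produces $q-1$ distinct nonzero dual codewords with support $(\w_1^{*},\emptyset)$, so
\[
\support_{\C^{\perp}}(\w_1^{*}, \emptyset) \;\ge\; q-1.
\]
Since $\ell \in \w_1^{*}$ and $|\w_1^{*}| \le r+1$, the term $\support_{\C^{\perp}}(\w_1^{*}, \emptyset)$ appears in the sum on the left-hand side of (\ref{propeq5}.a). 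Because every other term in that sum is non-negative, we conclude
\[
\sum_{\substack{\w_1 \subseteq \N \\ \ell \in \w_1,\, |\w_1| \le r+1}} \support_{\C^{\perp}}(\w_1, \emptyset) \;\ge\; q-1,
\]
which is (a). Combining the three arguments completes the proof of the proposition.
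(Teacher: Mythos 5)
Your proof is correct and matches the paper's (largely implicit) treatment: the paper presents Proposition \ref{prop5} as a direct translation of the LR, GR, and EU criteria of Definition \ref{def:UELRC} into support-enumerator form, and the scalar-multiple argument you use for part (a) — that $a\h \in \C^{\perp}$ for all nonzero $a \in \FFq$ gives $q-1$ dual codewords with the same support — is exactly the argument the paper invokes for the analogous constraint in Theorem \ref{thm:NLOP}. No discrepancies to report.
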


%
\end{remark}

\begin{remark}
For notation convenience, we will first consider only the case of local recovery.  However, the extension to include robust local recovery is straightforward. We will present the results afterwards.
\end{remark}

The following theorem gives the necessary conditions for existence of an update efficient locally repairable code based on its generator matrix specifications in Definition \ref{def:UELRC}.


\begin{theorem}[Necessary condition]\label{thm:4}
Let $\C$ be a  $(\n,\k, r,\beta,\delta)$ update-efficient locally repairable storage code. Then there exists
$(A_{\w_{1},\w_{2}} , C_{\w_{1},\w_{2}} : \: \w_{1 }\subseteq \N, \w_{2} \subseteq \Kp)$ such that  

\begin{align} \label{eqC1to12}
\begin{cases}
(\text{C1}) \quad  
C_{\w_{1},\w_{2}}   =  {\sum_{\bs_{1} \subseteq \N, \bs_{2} \subseteq \Kp } A_{\bs_{1},\bs_{2}}  \prod_{j=1}^{\n+\k}\kappa_{q}(s_j,w_j)}, 
& \quad \forall \w_{1} \subseteq \N, \w_{2} \subseteq \Kp  \\
(\text{C2}) \quad 
A_{\w_1,\w_2}  \geq 0, 
& \quad \forall \w_1 \subseteq \N,\w_2 \subseteq \K  \\ 
(\text{C3}) \quad 
C_{\w_1,\w_2}  \geq 0,  
& \quad \forall \w_1 \subseteq \N,\w_2 \subseteq \K \\ 
(\text{C4}) \quad    A_{\emptyset,\w_2} = 0  & \quad \forall \w_{2} \subseteq \K, |\w_{2}| \ge 1 \\ 
(\text{C5}) \quad    C_{\emptyset,\w_2} = 0  & \quad \forall \w_{2} \subseteq \K, |\w_{2}| \ge 1 \\
(\text{C6}) \quad   \sum_{\w_{1}\subseteq \N  } C_{\w_1,\w_2}  = (q-1)^{|\w_{2}|}q^{\n },  & \quad \forall \w_2 \subseteq \K, |\w_2|\ge 0 \\
(\text{C7}) \quad   \sum_{\w_{1}\subseteq \N  } A_{\w_1,\w_2} = (q-1)^{|\w_{2}|} ,  & \quad \forall \w_2 \subseteq \K, |\w_2|\ge 0 \\
(\text{C8}) \quad    A_{\emptyset,\emptyset}  = 1, &  \\ 
(\text{C9}) \quad   C_{\emptyset,\emptyset}  =q^{\k}, & \\
%
%
(\text{C10}) \quad  
\sum_{\w_{1}\subseteq \N : \ell \in \w_{1} \text{ and }|\w_{1}| \le r+1} C_{ \w_{1},\emptyset} \ge (q-1) q^{\k}   ,
& \quad \forall \ell \in\N  \\ 
(\text{C11}) \quad  
A_{\w_{1},\w_{2}}   =0, 
& \quad \forall  1 \le |\w_{1}| \le \beta  \\
(\text{C12}) \quad 
 A_{\w_{1},\w_2}  = 0 , 
& \quad \forall   |\w_{1}| > \delta, |\w_2|=1 \\
%
\end{cases}
\end{align}
\end{theorem}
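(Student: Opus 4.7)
The plan is to mirror the strategy used in the proof of Theorem \ref{thm:NLOP}, identifying the unknowns $A_{\w_1,\w_2}$ and $C_{\w_1,\w_2}$ with (scaled) support enumerators of the code $\C$ and its dual $\Cd$, and then verifying each of the twelve conditions by invoking results already established in the preceding propositions. Specifically, I would set
\[
A_{\w_1,\w_2} := \lambdaC(\w_1,\w_2), \qquad C_{\w_1,\w_2} := |\C| \cdot \lambdaCd(\w_1,\w_2),
\]
noting that, since $\C$ has a generator matrix $G$ of rank $\k$ over $\FFq$, we have $|\C| = q^{\k}$. The scaling factor $q^{\k}$ is chosen precisely so that condition (C1) becomes the MacWilliams identity of Proposition \ref{prop:duality} applied to the augmented code of length $\n+\k$ \emph{cleared of its normalising denominator}, i.e., $|\C| \cdot \lambdaCd(\w) = \sum_{\s} \lambdaC(\s) \prod_j \kappa_q(s_j,w_j)$.

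With these definitions in place, the remaining conditions are essentially bookkeeping. (C2) and (C3) follow immediately from the fact that support enumerators count codewords and are therefore non-negative. (C4) and (C5) restate Proposition \ref{prop4}, items (\ref{eqprop4}.a) and (\ref{eqprop4}.b), respectively, expressing that the only codeword or dual codeword supported on the pseudo-systematic block alone is the zero vector. (C6) and (C7) are (\ref{eqprop4}.c) and (\ref{eqprop4}.d) respectively, after multiplication by $|\C|=q^{\k}$ where appropriate; note that $(q-1)^{|\w_2|} q^{\n-\k} \cdot q^{\k} = (q-1)^{|\w_2|} q^{\n}$. (C8) and (C9) come from (\ref{eqprop4}.e), again multiplied by $|\C|$ in the dual case. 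Finally, the three ``structural'' conditions (C10), (C11), (C12) are, respectively, items (\ref{propeq5}.a) scaled by $|\C|$, (\ref{propeq5}.b), and (\ref{propeq5}.c) of Proposition \ref{prop5}, which in turn encode the LR, GR, and EU criteria of Definition \ref{def:UELRC}.

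The verification is therefore almost entirely mechanical once Propositions \ref{prop4} and \ref{prop5} are in hand. The one place to proceed with care is condition (C1): one must apply Proposition \ref{prop:duality} to the length-$(\n+\k)$ code, treating each coordinate uniformly, and ensure that the product $\prod_{j=1}^{\n+\k}\kappa_q(s_j,w_j)$ in (C1) really does correspond to the product on the right-hand side of \eqref{eq:duality}, where $\s=(\bs_1,\bs_2)$ and $\w=(\w_1,\w_2)$ are viewed as binary vectors over the combined index set $\N \cup \Kp$. Beyond this, the main conceptual point worth emphasising is why the linear programming reformulation in terms of $C_{\w_1,\w_2}$ (rather than $\lambdaCd(\w_1,\w_2)$ directly) is legitimate: clearing the denominator $|\C|$ replaces the ratio appearing in Proposition \ref{prop:duality} by a bilinear-free identity that is genuinely linear in $(A, C)$, which is what enables the subsequent linear programming treatment. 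That said, no technical obstacle is anticipated; the bulk of the proof is a line-by-line transcription.
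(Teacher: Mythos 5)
Your proposal is correct and follows essentially the same route as the paper: the paper likewise sets $A_{\w_1,\w_2}=\support_{\C}(\w_1,\w_2)$ and $C_{\w_1,\w_2}=q^{\k}\support_{\C^{\perp}}(\w_1,\w_2)$ (i.e., the dual enumerator scaled by $|\C|=q^{\k}$), obtains (C1) from the MacWilliams identity applied to the length-$(\n+\k)$ code, (C2)--(C3) from non-negativity of enumerators, (C4)--(C9) from Proposition \ref{prop4}, and (C10)--(C12) from Proposition \ref{prop5}. Your explicit checks of the $q^{\k}$ scaling in (C6), (C9) and (C10) are consistent with the paper's constants, so no gap remains.
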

\begin{proof}
Let $\C$ be a $(r,\beta,\delta)$ update efficient linear storage code. We define

\begin{align}
A_{\w_{1},\w_{2}} & \triangleq \support_{\C}(\w_{1},\w_{2}) \label{eq23}\\
C_{\w_{1},\w_{2}} & \triangleq q^{\k} \support_{\C^{\perp}}(\w_{1},\w_{2}) .\label{eq24}
\end{align}
Constraint (C1)  follows directly from the MacWilliams identity in Proposition \ref{prop:duality}. Constraint (C2)  and (C3) follow from \eqref{eq23} and \eqref{eq24} and that enumerating functions are non-negative. 
Constraints (C4) - (C9) are consequences of Proposition \ref{prop4} and 
Constraints (C10)-(C12) are due to Proposition \ref{prop5}.
\end{proof}
%

As before, the constraints in \eqref{eqC1to12}  are overly complex. Invoking symmetry, we will to simplify the above set of constraints.  
%

\begin{proposition} \label{prop:PermuteCodeUE}
Suppose the generator matrix specified with the column vectors $(\bg_{1} , \ldots, \bg_{\n+\k} )$ defines a $(r,\beta,\delta)$ code $\C$. For any $\sigma_{1} \in S_{\N}$ (i.e., symmetric group of $\N$) and $\sigma_{2} \in S_{\Kp}$ (i.e., symmetric group of $\Kp$), let $\C^{\sigma_{1},\sigma_{2}}$ be another code specified by 
$
(\bff_{i}, i=1, \ldots, \n+\k)
$
such that

\begin{align*}
\bff_{i} & =  \bg_{\sigma_{1}(i)}  \quad  \text{ for  } i \in \N \\ 
\bff_{\n+j } & =  \bg_{\n+\sigma_{2}(j)} \quad   \text{ for } j \in \K. 
\end{align*}
Then  $\C^{\sigma_{1},\sigma_{2}}$ is still a  $(r,\beta,\delta)$ code.
\end{proposition}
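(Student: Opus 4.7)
My plan is to verify that each structural and defining property of a $(r,\beta,\delta)$ update-efficient locally repairable code is invariant under the coordinate permutation induced by $(\sigma_1,\sigma_2)$, where $\sigma_1$ acts on the node indices $\N$ and $\sigma_2$ acts on the source-symbol indices $\Kp$. The pivotal observation is that if $P$ denotes the block-diagonal column-permutation matrix corresponding to $(\sigma_1,\sigma_2)$, then the generator matrix of $\C^{\sigma_1,\sigma_2}$ equals $GP$, so every codeword of $\C^{\sigma_1,\sigma_2}$ has the form $\c P$ for some $\c \in \C$. Hence the map $\c \mapsto \c P$ is a coordinate-wise bijection from $\C$ onto $\C^{\sigma_1,\sigma_2}$ that preserves $|\lambda(\c_1)|$ and $|\lambda(\c_2)|$ separately, and the analogous bijection sends $\Cd$ onto $(\C^{\sigma_1,\sigma_2})^\perp$ with the same invariance.

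First I would dispose of the two structural requirements from the definition of an update-efficient storage code. Full rank of the submatrix $[\bff_1,\dots,\bff_\n]$ is immediate, since $\sigma_1$ merely reorders the full-rank columns $\bg_1,\dots,\bg_\n$. For the pseudo-systematic property, the last $\k$ columns of the generator $F$ are not $I_\k$ but rather the permutation matrix $P_{\sigma_2}$; however, left-multiplying $F$ by the invertible $P_{\sigma_2}^{-1}$ is a row operation that preserves the row space (and hence the code) while restoring $I_\k$ in the last $\k$ columns. The first $\n$ columns of the resulting generator still have rank $\k$ because left-multiplication by an invertible matrix preserves column rank. So $\C^{\sigma_1,\sigma_2}$ admits a pseudo-systematic full-rank generator.

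Next I would verify the three criteria of Definition \ref{def:UELRC}. For LR: given $i' \in \N$, set $i = \sigma_1(i')$ and invoke LR in $\C$ to obtain $\h = (\h_1,\h_2) \in \Cd$ with $i \in \lambda(\h_1)$, $\lambda(\h_2) = \emptyset$, and $|\lambda(\h_1)| - 1 \le r$; the image $\h P$ is then a dual codeword of $\C^{\sigma_1,\sigma_2}$ whose $\N$-part has support $\sigma_1^{-1}(\lambda(\h_1)) \ni i'$ of the same size, and whose $\Kp$-part is still zero since $\sigma_2^{-1}(\emptyset) = \emptyset$. For GR and EU: the bijection and the separate invariance of support sizes yield $\support_{\C^{\sigma_1,\sigma_2}}(\w_1,\w_2) = \support_\C(\sigma_1(\w_1),\sigma_2(\w_2))$, and since $|\sigma_1(\w_1)| = |\w_1|$ and $|\sigma_2(\w_2)| = |\w_2|$, the vanishing conditions for $1 \le |\w_1| \le \beta$ and for $|\w_1| > \delta$ with $|\w_2|=1$ transfer at once. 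The only mildly tricky point is the bookkeeping around restoring the identity block in the pseudo-systematic requirement; once that is handled by the single row operation above, the rest of the argument is a routine observation that all three criteria depend only on support sizes (and on index membership, which is covered by letting $i'$ range over $\N$).
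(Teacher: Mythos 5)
Your proof is correct and takes essentially the same route as the paper's: the new code is just a coordinate relabelling of $\C$, so $\support_{\C^{\sigma_1,\sigma_2}}(\w_1,\w_2)=\support_{\C}(\sigma_1(\w_1),\sigma_2(\w_2))$ and the LR, GR and EU criteria, which depend only on support sizes and index membership, carry over unchanged. You actually supply more detail than the paper's one-line ``relabelling'' argument, notably the dual-code bijection via the permutation matrix and the row operation restoring the identity block for the pseudo-systematic property.
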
 

\begin{proof}
Let $\c=[\c_1,\c_2]$ be a codeword in $\C$ where $\c_1=[c_1,\ldots,c_{\n}]$ and $\c_2=[c_{\n+1},\ldots,c_{\n+\k}]$.  Then, 
the vector $\c^{\sigma_1,\sigma_2}=[\c_1^{\sigma_1},\c_2^{\sigma_2}]$ is a codeword in $\C^{\sigma_1,\sigma_2}$ where 
$\c_1^{\sigma_1}=[\sigma_1(c_1),\ldots,\sigma_1(c_{\n})]$ and 
$\c_2^{\sigma_2}=[c_{\n+\sigma_2({1})},\ldots, c_{\n+\sigma_2({\k})}]$. 
In other words, any codeword $\c^{\sigma_1,\sigma_2} \in \C^{\sigma_1,\sigma_2}$ is a permutation of a codeword $\c \in \C$ for all $\sigma_1 \in S_\N$ and $\sigma_2 \in S_\K$. 
Equivalently, the new code $\C^{\sigma_1,\sigma_2}$ is obtained from $\C$ by relabelling or reordering the codeword indices. The proposition thus follows.    
\end{proof}

\begin{corollary}\label{cor:1}
Suppose  $(a_{\w_{1},\w_{2}} , c_{\w_{1},\w_{2}} :\: \w_{1 }\subseteq \N, \w_{2} \subseteq \Kp)$ satisfies \eqref{eqC1to12}. Let 

\begin{align}
a^{*}_{\w_{1},\w_{2}} & = \frac{1}{|S_{\N}||S_{\Kp}|}\sum_{\sigma_{1} \in S_{\N}}\sum_{\sigma_{2} \in S_{\Kp}} a_{\sigma_{1}(\w_{1}), \sigma_{2}(\w_{2})  } \label{eq:AvgUE1} \\
c^{*}_{\w_{1},\w_{2}} & = \frac{1}{|S_{\N}||S_{\Kp}|}\sum_{\sigma_{1} \in S_{\N}}\sum_{\sigma_{2} \in S_{\Kp}} c_{\sigma_{1}(\w_{1}), \sigma_{2}(\w_{2})}. \label{eq:AvgUE2}
\end{align} 
Then $(a^{*}_{\w_{1},\w_{2}} , c^{*}_{\w_{1},\w_{2}} :\: \w_{1 }\subseteq \N, \w_{2} \subseteq \Kp)$  also satisfies \eqref{eqC1to12}. 
Furthermore, for any $\w_{1}, \bs_{1} \subseteq \N$ and $\w_{2}, \bs_{2} \subseteq \Kp$ such that 
$|\w_{1}| = |\bs_{1}|$ and $|\w_{2}| = |\bs_{2}|$. Then 

\begin{align}
a^{*}_{\w_{1},\w_{2}} & = a^{*}_{\bs_{1},\bs_{2}}\\
c^{*}_{\w_{1},\w_{2}} & = c^{*}_{\bs_{1},\bs_{2}}.
\end{align}
\end{corollary}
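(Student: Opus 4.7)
The plan is to mirror the argument used already for the simpler case in Proposition~\ref{prop:2} and the corollary that follows it, extended to the bipartite index structure $(\w_1,\w_2)$ with $\w_1\subseteq\N$ and $\w_2\subseteq\Kp$. The proof breaks into three steps: (i) permutation-invariance of the feasible set, (ii) convexity of the feasible set, and (iii) a change-of-variable argument for the symmetry identity.

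For step~(i), I would invoke Proposition~\ref{prop:PermuteCodeUE}: for every $(\sigma_1,\sigma_2)\in S_{\N}\times S_{\Kp}$, if $(a_{\w_1,\w_2},c_{\w_1,\w_2})$ is a feasible solution of \eqref{eqC1to12}, then its permuted version $(a_{\sigma_1(\w_1),\sigma_2(\w_2)},c_{\sigma_1(\w_1),\sigma_2(\w_2)})$ is also feasible. This amounts to checking that each of (C1)--(C12) is invariant under the joint relabelling $(\w_1,\w_2)\mapsto(\sigma_1(\w_1),\sigma_2(\w_2))$. Constraints (C2)--(C12) depend only on the sizes of $\w_1,\w_2$ and on simple sum/indicator conditions, so invariance is immediate. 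For (C1) I would use that the MacWilliams kernel factors as a product of per-coordinate terms $\kappa_q(s_j,w_j)$, and is therefore invariant under any common reindexing of the coordinates of $\s$ and $\w$; since $\sigma_1$ acts on the first $\n$ coordinates and $\sigma_2$ on the last $\k$, the summation over $(\bs_1,\bs_2)$ in (C1) can be rewritten by the substitution $(\bs_1,\bs_2)\mapsto(\sigma_1^{-1}(\bs_1),\sigma_2^{-1}(\bs_2))$, which reproduces (C1) with $(\w_1,\w_2)$ replaced by $(\sigma_1(\w_1),\sigma_2(\w_2))$.

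For step~(ii), I would observe that every constraint in \eqref{eqC1to12} is linear in $(a_{\w_1,\w_2},c_{\w_1,\w_2})$, so the feasible set is a convex (in fact, polyhedral) subset of the relevant Euclidean space. The vectors $a^{*}$ and $c^{*}$ defined by \eqref{eq:AvgUE1}--\eqref{eq:AvgUE2} are uniform averages of $|S_{\N}|\,|S_{\Kp}|$ feasible solutions, hence convex combinations of feasible points, hence themselves feasible.

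For step~(iii), I would deduce the equal-size identities $a^{*}_{\w_1,\w_2}=a^{*}_{\bs_1,\bs_2}$ and $c^{*}_{\w_1,\w_2}=c^{*}_{\bs_1,\bs_2}$ by a standard reindexing. Given $|\w_1|=|\bs_1|$ and $|\w_2|=|\bs_2|$, choose $\tau_1\in S_{\N}$ and $\tau_2\in S_{\Kp}$ with $\tau_1(\w_1)=\bs_1$ and $\tau_2(\w_2)=\bs_2$; then the change of variable $\sigma_1\mapsto\sigma_1\tau_1^{-1}$, $\sigma_2\mapsto\sigma_2\tau_2^{-1}$ is a bijection of $S_{\N}\times S_{\Kp}$ that transforms the sum defining $a^{*}_{\w_1,\w_2}$ into the sum defining $a^{*}_{\bs_1,\bs_2}$, and likewise for $c^{*}$. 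The only mildly delicate point in the whole argument is bookkeeping the hybrid permutation $(\sigma_1,\sigma_2)$ inside the MacWilliams kernel of (C1); once the product-of-per-coordinate-factors structure is noted, this is routine, so no step should present a serious obstacle.
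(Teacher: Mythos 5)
Your proposal is correct and follows essentially the same route as the paper's proof: permutation invariance of the constraint set (the paper cites Proposition \ref{prop:PermuteCodeUE}, while you additionally verify (C1)--(C12) directly, including the coordinate-wise factorisation of the MacWilliams kernel), feasibility of the average by convexity/linearity of the constraints, and the equal-size identity by reindexing the permutation sum (the paper phrases this as equality of the orbits $\{(\sigma_1(\w_1),\sigma_2(\w_2))\}$ and $\{(\sigma_1(\bs_1),\sigma_2(\bs_2))\}$, which is the same change-of-variables idea). No gaps; your extra bookkeeping for (C1) is just a more explicit rendering of the same argument.
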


\begin{proof}
From Proposition \ref{prop:PermuteCodeUE}, if $(a_{\w_{1},\w_{2}} , c_{\w_{1},\w_{2}} :\: \w_{1 }\subseteq \N, \w_{2} \subseteq \Kp)$ satisfy constraints \eqref{eqC1to12}, then $(a_{\sigma_1(\w_{1}),\sigma_2(\w_{2})} , c_{\sigma_1(\w_{1}),\sigma_2(\w_{2})} :\: \w_{1 }\subseteq \N, \w_{2} \subseteq \Kp)$ will satisfy them for all $\sigma_1 \in S_\N$ and $\sigma_2 \in \K$. Since all the constraints in \eqref{eqC1to12} are convex, any convex combination of all these feasible solutions is still feasible. Hence, 
 $(a^{*}_{\w_{1},\w_{2}} , c^{*}_{\w_{1},\w_{2}} :\: \w_{1 }\subseteq \N, \w_{2} \subseteq \Kp)$ is still feasible. 
Also, 
\[
\left\{(\sigma_1(\w_1),\sigma_2(\w_2)): \forall \sigma_1 \in S_\N, \sigma_2 \in S_\K\right\}=\left\{(\sigma_1(\s_1),\sigma_2(\s_2)): \forall \sigma_1 \in S_\N, \sigma_2 \in S_\K\right\}
\]
if $|\w_{1}| = |\bs_{1}|$ and $|\w_{2}| = |\bs_{2}|$ for any $\w_{1}, \bs_{1} \subseteq \N$ and $\w_{2}, \bs_{2} \subseteq \Kp$.
\end{proof}


\begin{theorem}[Necessary condition]\label{thm:5}
Let $\C$ be a  $(\n,\k, r,\beta,\delta)$ update-efficient storage code. Then there exists  real numbers $(a_{t_{1},t_{2}} , c_{t_{1},t_{2}} :\: t_{1 } = 0, \ldots,\n, t_{2} = 0, \ldots, \k )$ satisfying \eqref{eqD1to12}. 
%

\begin{align}\label{eqD1to12} 
\begin{cases}
(D1) \quad
c_{t_{1},t_{2}}  = \sum_{u_{1}=0}^{t_{1}}\sum_{v_{1}=0}^{\n-t_{1}}\sum_{u_{2}=0}^{t_{2}}\sum_{v_{2}=0}^{\k -t_{2}}  \Xi(u_{1},v_{1},u_{2},v_{2}) a_{u_1+v_1,u_2+v_2} \\
(D2) \quad 
a_{t_1,t_2}  \geq 0,   \qquad \qquad \qquad \qquad \qquad   \forall 0\le t_{1} \le \n, 0\le t_{2} \le \k   \\
(D3) \quad 
c_{t_1,t_2}  \geq 0,  \qquad \qquad \qquad \qquad \qquad \forall 0\le t_{1} \le \n, 0\le t_{2} \le \k \\
(D4) \quad 
a_{0,t_2}  = 0, \qquad \qquad \qquad \qquad \qquad \forall t_{2} \ge 1 \\
(D5) \quad
c_{0,t_2}  = 0,  \qquad \qquad \qquad \qquad \qquad \forall t_{2} \ge 1 \\
(D6) \quad 
\sum_{t_{1}=0}^{\n} {\n \choose t_{1}} c_{t_{1},t_{2}}  = (q-1)^{t_{2}}q^{\n} ,    
\qquad \qquad \qquad \qquad \qquad \forall t_{2}  \\
(D7) \quad 
\sum_{t_{1}=0}^{\n} {\n \choose t_{1}} a_{t_1,t_{2}}   = (q-1)^{t_{2}}, 
\qquad \qquad \qquad \qquad \qquad \forall t_{2}  \\
(D8) \quad
a_{0,0}  = 1,   \qquad \qquad \qquad \qquad \qquad \\
(D9) \quad
c_{0,0} = q^{\k}   \qquad \qquad \qquad \qquad \qquad \\
(D10) \quad
\sum_{t_{1}=2 }^{r+1}{\n-1 \choose t_{1}-1}  c_{ t_{1},0} \ge (q-1)q^{\k} \\
%
(D11) \quad  
a_{t_{1},t_{2}}  = 0,   \qquad \qquad \qquad \qquad \qquad  \forall 
1\le t_{1} \le \beta \\
(D12) \quad  
a_{t_1,1}  = 0 , 
\qquad \qquad \qquad \qquad \qquad  \forall   t_{1} > \delta 
\end{cases}
\end{align}
where 
\[
\Xi(u_{1},v_{1},u_{2},v_{2})  = 
(-1)^{u_{1} + u_{2}}(q-1)^{t_{1}+t_{2}-u_{1}-u_{2}}
{t_{1} \choose u_{1} } {N-t_{1} \choose v_{1}} {t_{2} \choose u_{2} } {k-t_{2} \choose v_{2}}.
\]
\end{theorem}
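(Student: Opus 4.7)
The plan is to invoke the symmetrization result of Corollary~\ref{cor:1} to reduce the constraints of Theorem~\ref{thm:4}, which are indexed by subsets $(\w_1,\w_2)\subseteq \N \times \Kp$, to an equivalent system indexed only by the pair of sizes $(|\w_1|,|\w_2|) = (t_1,t_2)$. Given any $(r,\beta,\delta)$ update-efficient locally repairable code, Theorem~\ref{thm:4} yields a feasible $(A_{\w_1,\w_2},C_{\w_1,\w_2})$ for \eqref{eqC1to12}. I apply Corollary~\ref{cor:1} to pass to the symmetrized tuple $(a^{*}_{\w_1,\w_2},c^{*}_{\w_1,\w_2})$, which remains feasible and depends only on $(|\w_1|,|\w_2|)$. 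I then \emph{define}
\[
a_{t_1,t_2} \triangleq a^{*}_{\w_1,\w_2}, \qquad c_{t_1,t_2} \triangleq c^{*}_{\w_1,\w_2}, \qquad \text{for any }(\w_1,\w_2)\text{ with }|\w_1|=t_1,\: |\w_2|=t_2,
\]
and verify each constraint (D1)--(D12) by specialising the corresponding constraint in \eqref{eqC1to12}.

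The only step with non-trivial combinatorics is (D1), which I obtain from (C1) as follows. Using the product form of the MacWilliams kernel, as already noted in \eqref{eqMW}, I extend it to the pair $(\s,\w)=(\s_1,\s_2;\w_1,\w_2)$ to get
\[
\prod_{j=1}^{\n+\k}\kappa_q(s_j,w_j)
= (-1)^{|\s_1\cap\w_1|+|\s_2\cap\w_2|}\,(q-1)^{|\w_1\setminus\s_1|+|\w_2\setminus\s_2|}.
\]
For fixed $(\w_1,\w_2)$ with $|\w_1|=t_1$, $|\w_2|=t_2$, I partition the sum in (C1) according to the four parameters $u_1=|\s_1\cap\w_1|$, $v_1=|\s_1\setminus\w_1|$, $u_2=|\s_2\cap\w_2|$, $v_2=|\s_2\setminus\w_2|$. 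The number of subsets $\s_1\subseteq \N$ with these intersection sizes is $\binom{t_1}{u_1}\binom{\n-t_1}{v_1}$, and analogously $\binom{t_2}{u_2}\binom{\k-t_2}{v_2}$ choices for $\s_2$. On each such class $A_{\s_1,\s_2}=a_{u_1+v_1,\,u_2+v_2}$ by symmetry, which directly yields the kernel $\Xi(u_1,v_1,u_2,v_2)$ and hence (D1).

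The remaining constraints are routine specialisations: (D2), (D3), (D8), (D9) are immediate translations of (C2), (C3), (C8), (C9); (D4) and (D5) come from (C4) and (C5) upon noting that $|\w_1|=0$ forces $\w_1=\emptyset$; (D11) and (D12) come from (C11) and (C12) together with (C2); and (D6), (D7) follow from (C6), (C7) by grouping the $\binom{\n}{t_1}$ subsets of $\N$ of size $t_1$. For (D10), symmetrization of (C10) shows that $\sum_{\w_1 \ni \ell,\, |\w_1|\le r+1} c^{*}_{\w_1,\emptyset} \ge (q-1)q^{\k}$ holds, and since the $c^{*}$'s depend only on $|\w_1|$, the number of $\w_1 \subseteq \N$ with $\ell \in \w_1$ and $|\w_1|=t_1$ is $\binom{\n-1}{t_1-1}$; restricting to $t_1 \ge 2$ (as $c^*_{\{\ell\},\emptyset}=0$ by (D5) applied symmetrically with the codeword weight considerations, or more directly because a single-support dual codeword contradicts the code definition) gives the sum $\sum_{t_1=2}^{r+1}\binom{\n-1}{t_1-1}c_{t_1,0}\ge (q-1)q^{\k}$.

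The main obstacle is organising the bookkeeping in step (D1), where the kernel must be re-expressed as a weighted sum over intersection/difference sizes in the two blocks $\N$ and $\Kp$ simultaneously, and ensuring that the powers of $(q-1)$ and the alternating signs assemble correctly into $\Xi$. Once this combinatorial identity is established, the translation of the remaining constraints is mechanical and the theorem follows.
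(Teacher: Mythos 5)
Your proposal is correct and follows essentially the same route as the paper: symmetrize the feasible tuple from Theorem \ref{thm:4} via Corollary \ref{cor:1}, re-index by the support sizes $(t_1,t_2)$, and obtain (D1) from (C1) by counting the pairs $(\s_1,\s_2)$ with prescribed intersection and difference sizes, which is exactly the paper's argument. Your additional remark justifying why the sum in (D10) may start at $t_1=2$ (i.e., that $c_{1,0}$ contributes nothing) is a small extra point of care beyond the paper, which treats that rephrasing as routine.
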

\begin{proof}
The theorem is   essentially a  direct consequence of Theorem \ref{thm:4} and Corollary \ref{cor:1}, which guarantees that there exists a ``symmetric'' feasible solution 
$(a^{*}_{\w_{1},\w_{2}} , c^{*}_{\w_{1},\w_{2}} :\: \w_{1 }\subseteq \N, \w_{2} \subseteq \Kp)$ satisfying \eqref{eqC1to12}. Due to symmetry, 
we can rewrite 
$a^{*}_{\w_{1},\w_{2}}$ as  $a_{t_{1}, t_{2}}$
and 
$c^{*}_{\w_{1},\w_{2}}$ as $c_{t_{1}, t_{2}}$
where $t_{1} = |\w_{1}|$ and $ t_{2} = |\w_{2}|$. 
The set of constraints in \eqref{eqD1to12}  is basically obtained by rewriting 
\eqref{eqC1to12} and grouping like terms.  
Most rephrasing is straightforward. The more complicated one is (D1), where we will show how to rewrite it. 
 
Recall the constraint (C1) in \eqref{eqC1to12}

\begin{align}\label{eq31}
C_{\w_{1},\w_{2}}   =  {\sum_{\bs_{1} \subseteq \N, \bs_{2} \subseteq \Kp } A_{\bs_{1},\bs_{2}}  \prod_{j=1}^{\n+\k}\kappa_{q}(s_j,w_j)}, 
 \quad \forall \w_{1} \subseteq \N, \w_{2} \subseteq \Kp  .
\end{align}
Let $|\w_1|=t_1$, $|\w_2|=t_2$. For any  

 \begin{align}
 0\le u_{1} \le t_{1}, \: \text{ and } 0\le v_{1} \le \n-t_{1} \\
 0\le u_{2} \le t_{2}, \: \text{ and } 0\le v_{2} \le \k-t_{2},
 \end{align} 
the number of pairs of  $(\s_{1},\s_{2})$  such that    
  
  \begin{align*}
 u_{1} = | \w_{1} \cap \s_{1}| \text{ and }  v_{1} = | \s_{1} \setminus \w_{1} |    \\ 
 u_{2} = | \w_{2} \cap \s_{2}| \text{ and }  v_{2} = | \s_{2} \setminus \w_{2} | 
 \end{align*}
 is equal to 
 \[
 {t_{1} \choose u_{1} } {\n -t_{1} \choose v_{1}} {t_{2} \choose u_{2} } {\k-t_{2} \choose v_{2}}.
 \]
Also, for each such pair,  by \eqref{eqMW}, 
\[
 \prod_{j=1}^{\n+\k}  \kappa_{q}(s_j,w_j) = (-1)^{u_{1} + u_{2}}(q-1)^{t_{1}+t_{2}-u_{1}-u_{2}}.
\]  
Grouping all like terms together, \eqref{eq31} can be written as in (D1) in \eqref{eqD1to12}.
\end{proof}

It is worth mentioning that unlike the linear programming problem in Theorem \ref{thm:SimpRLRC}, the code dimension $K$ (hence, the code size $|\C|=q^K$) is predetermined in Theorem \ref{thm:5}.
Notice that the complexity of the constraints in Theorem \ref{thm:4} is increased exponentially by the number of the storage nodes $\n$ and the number of the information symbols $\k$. However, the symmetrisation technique which is applied to the constraints significantly reduces the number of the variables to $2(\n+1)(\k+1)$. 

\begin{remark} 
Theorem \ref{thm:5} proves that if there exist no real numbers $(a_{t_1,t_2}, c_{t_1,t_2}~:~0\leq t_1\leq N, 0\leq t_2 \leq K)$ satisfying constraints (D1)--(D12) in \eqref{eqD1to12}, then there exist no $(N,K,r,\beta,\delta)$ update-efficient storage code.
\end{remark}


\begin{remark}
In the previous linear programming bound, we consider only the local recovery criteria. 
Extension to robust local recovery criteria is similar.  

For the robust local recovery criteria, recall from Definition \ref{def:RLRC}, it requires that for any failed node $i\in\N$ and additional $\Gamma$ failed nodes $\gamma$, there exists at least $\zeta$ dual codewords (with different supports).

In other words, for any $i\in \N$ and $\gamma \subset \N \setminus i$ such that $|\gamma| = \Gamma$,

\begin{align}
\sum_{\w_{1} \subseteq \N,  i\in\w_{1}, \gamma \cap \w_{1} =\emptyset} 
\support_{\C^{\perp}} (\w_{1} , \emptyset)  \ge \zeta(q-1).
\end{align}
As before, the constraint can be rewritten as follows, after symmetrisation

\begin{align}
\sum_{t_{1} = 2}^{ r+1} {\n - \Gamma- 1 \choose t_{1} - 1} c_{t_{1},0} \ge \zeta (q-1) q^{\k}.
\end{align}

\end{remark}

\section{Numerical Results And Discussion} \label{Sec:results}

In this section, we present two examples of  robust locally repairable codes. We will show that  these codes can repair a failed node locally when some of the survived nodes are not available. Using our bound, we prove that these codes are optimal.

\begin{example} \label{exam:1}
Consider a binary linear storage code of length $n=16$ with $k=9$ as defined by the following parity check equations

\begin{align} \label{Eq:P1}
0 & = P_1  +C_1 + C_2 + C_3 \\ 
0 & = P_2 + C_4 + C_5 + C_6 \\
0 & = P_3+ C_7 + C_8 + C_9 \\ \label{Eq:P4}
0 & = P_4+C_1 + C_4 + C_7 \\
0 & = P_5+ C_2 + C_5 + C_8 \\
0 & = P_6+ C_3 + C_6 + C_9 \\ \label{Eq:P7} \nonumber
0 & = P_7+ C_1 + C_2 + C_3 \\ \nonumber
& \qquad + C_4 + C_5 + C_6 \\ 
& \qquad + C_7 + C_8 + C_9.
\end{align}
Here,  $(C_{1}, \ldots, C_{9}, P_{1}, \ldots, P_{7})$ correspond to the content stored at the 16 nodes. 
In particular, we might interpret that $C_{1} , \ldots, C_{9}$ are the systematic bits while $P_{1}, \ldots, P_{7}$ are the parity check bits. The code can also be represented by Figure \ref{fig:Chan_Code}.
In this case, the parity check equations are equivalent to that the sum of rows  and the sum of columns are all zero. 

According to  \eqref{Eq:P1}-\eqref{Eq:P7}, every failed node (either systematic or parity) can be repaired by two different repair groups of size $r=3$. For instance, if $C_1$ is failed, it can be repaired by repair group $R_{C_1}^1=\{C_2, C_3, P_1\}$   or repair group $R_{C_1}^2=\{C_4, C_7, P_4\}$. Notice that the two repair groups are disjoint. Therefore, even if one of the repair groups is not available, there exists an alternative repair group to locally repair the failed node. 

\begin{figure}[t]
\centering
\includegraphics[width=.35\textwidth]{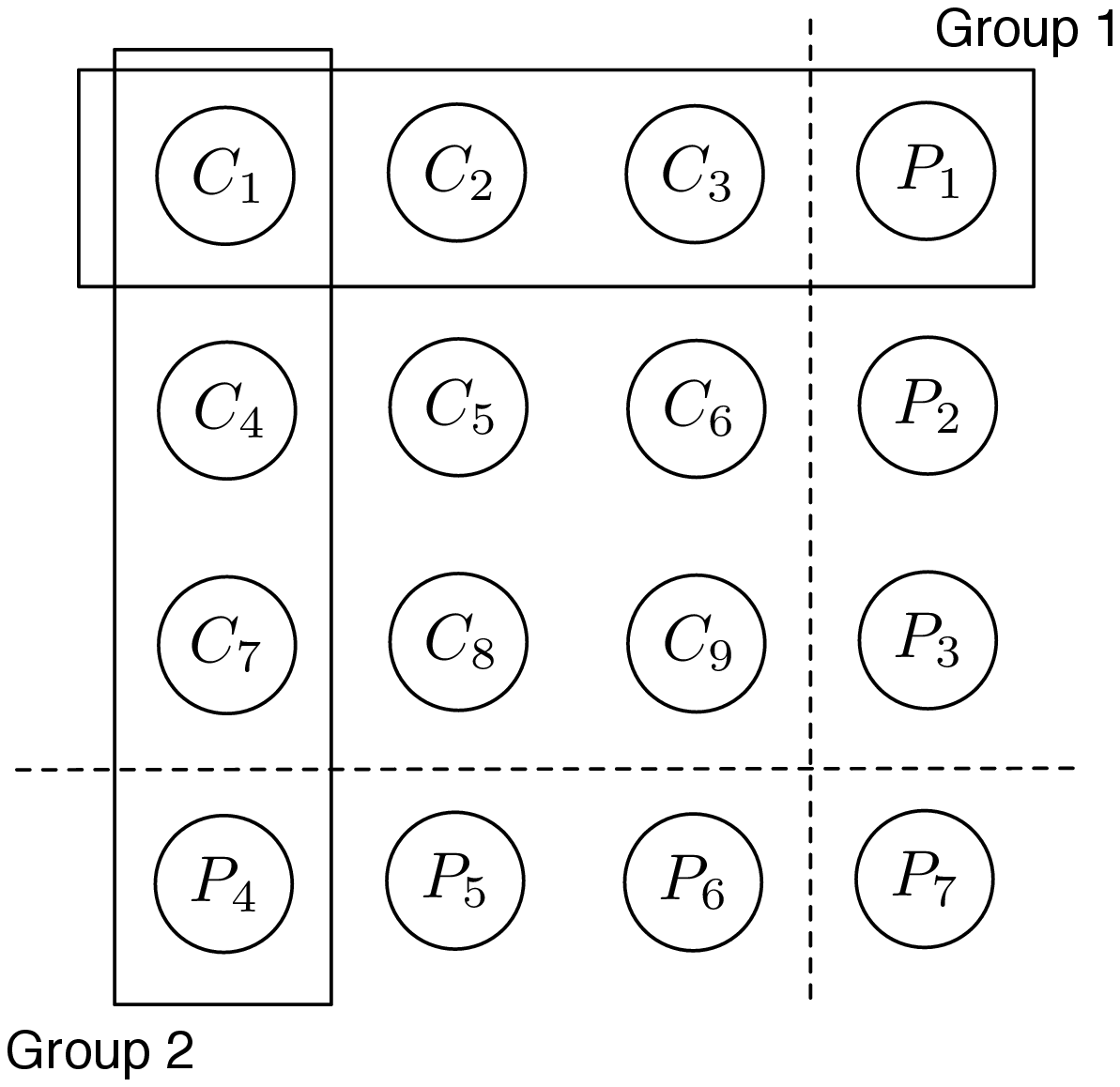}
\caption{A (3,3,1,1) robust locally repairable code of length $\n=16$ and dimension $\k=9$.}
\label{fig:Chan_Code}
\end{figure}

It can be verified easily that our code has a minimum distance of 4 and it is a $(3,3,1,1)$ robust locally repairable code. Therefore, any failed node can be repaired by at least $\zeta=1$ repairing group of size $r=3$ in the presence of any $\Gamma=1$ extra failure. Also, the original data can be recovered even if there are $\beta=3$ simultaneous failure. In fact,  it is also a $(3,3,0,2)$ robust locally repairable code.

Now, we will use the bound obtained earlier to show that our code is optimal. We have plotted our bound in Figure \ref{fig:Chan_Code_Sim} when 
$n= 16$ and  $\beta=3$. 
The horizontal axis is the code locality (i.e., $r$) and the vertical axis is the dimension of the code (or $\log |\C|$). From the figure, when $r=3$,  the dimension of a $(3,3,1,1)$ robust locally repairable code is at most 9. And our constructed code has exactly 9 dimensions. Therefore, our code meets the bound and is optimal. 
In fact, our bound also indicates that the dimension of a $(3,3,0,2)$ robust locally repairable code is also at most 9.  Therefore, our code is in fact an optimal $(3,3,1,1)$ and $(3,3,0,2)$ robust locally repairable code.

\end{example}


\begin{figure}[t]
\centering
\includegraphics[width=1\textwidth]{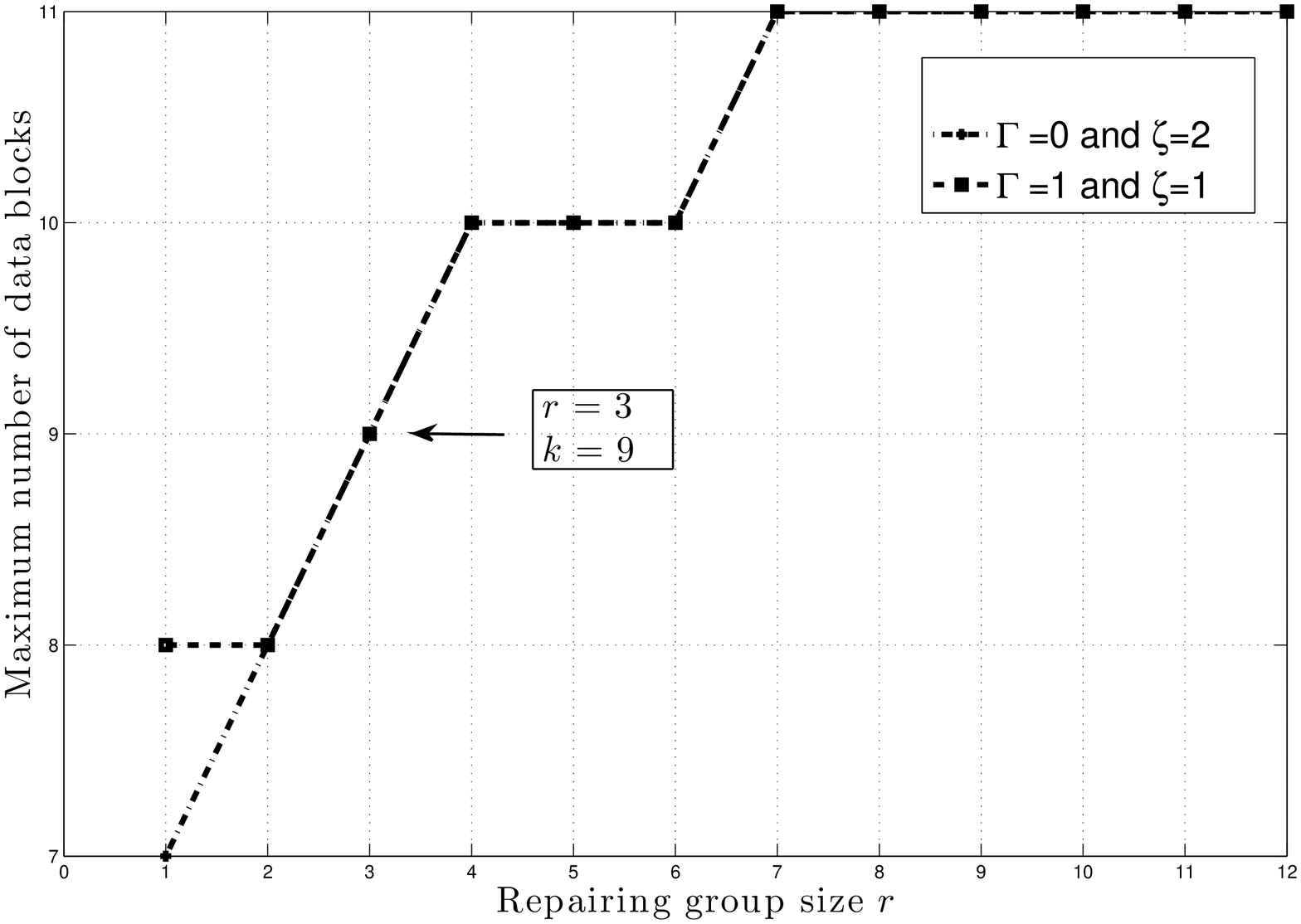}
\caption{Upper bounds for $(3,3,\Gamma,\zeta)$ binary robust locally repairable code of length $n=16$.}
\label{fig:Chan_Code_Sim}
\end{figure}

\begin{example}
Consider a binary linear code of length $n=8$ and dimension $k=4$ defined by the following parity check equations:  

\begin{align} \label{Eq:cube1}
P_1 & = C_1 + C_2 + C_3 \\ \label{Eq:cube2}
P_2 & = C_1 + C_2 + C_4 \\ \label{Eq:cube3}
P_3 & = C_2 + C_3 + C_4 \\ \label{Eq:cube4}
P_4 & = C_1 + C_3 + C_4
\end{align}
Again, $C_{1}, \ldots, C_{4} $ are the information bits while $P_{1}, \ldots, P_{4}$ are the parity check bits.
The code can be represented by Figure \ref{fig:Cube_Code}. The above parity check equations imply that 
the sum of any node and its three adjacent nodes in Figure \ref{fig:Cube_Code} is always equal to zero.

This code has a minimum distance of 4. According to the Equations \eqref{Eq:cube1}-\eqref{Eq:cube4}, for every single node failure, there exists 7 repair groups with size $r=3$. 
For example, suppose $C_1$ fails. Then, the repair groups are 

\begin{align*}
&R^{C_1}_1= \{3,4,8\}, R^{C_1}_2 = \{2,4,6\}, R^{C_1}_3 = \{2,3,5\} \\
&R^{C_1}_4 = \{2,7,8\}, R^{C_1}_5 = \{3,6,7\}, R^{C_1}_6 = \{4,5,7\} \\
&R^{C_1}_7 = \{5,6,8\} 
\end{align*}

\begin{figure}[t]
\centering
\includegraphics[width=.35\textwidth]{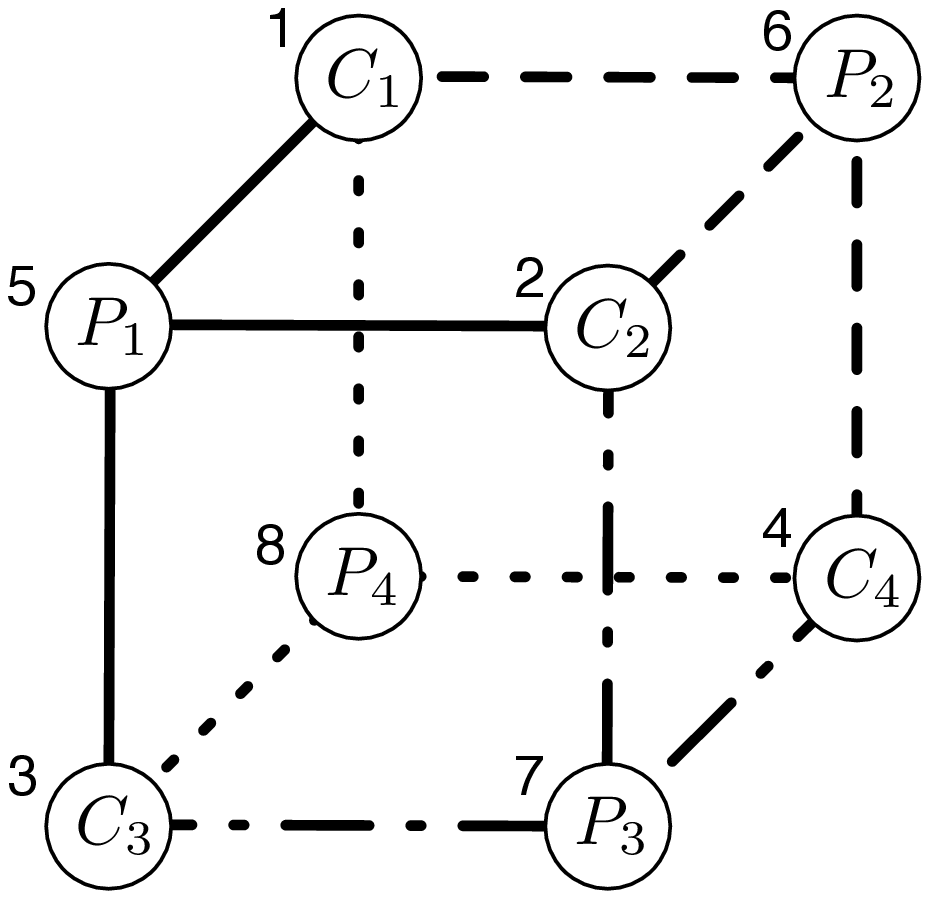}
\caption{A  robust locally repairable code of length $n=8$ and dimension $k=4$ with $\zeta=7$ repair group for any failure.}
\label{fig:Cube_Code}
\end{figure}

Hence, our code is a $(3,3,0,7)$  robust locally repairable code. Furthermore, 
it can be directly verified that for any distinct $i , j \neq 1$, 

\begin{align}\label{eq20}
\left| \left\{  \ell : i \not\in R^{C_{1}}_{\ell} \right\} \right|  = 4
\end{align}
and 

\begin{align}\label{eq21}
\left| \left\{ \ell :  \{i,j\} \cap R^{C_{1}}_{\ell} = \emptyset  \right\} \right| = 2.
\end{align}

Therefore, if any one of the surviving nodes is not available, \eqref{eq20} implies that there are still 4 alternative repair groups. 
This means that our code is also a $(3,3,1,4)$ robust locally repairable code. Similarly, by \eqref{eq21}, our code is also a $(3,3,2,2)$ robust locally repairable code.

As shown in our bound (see Figure \ref{fig:Cube_Code_Sim}), our code  has the highest codebook size for the given parameters  among all
binary $(3,3,0,7)$, $(3,3,1,4)$ and $(3,3,2,2)$ robust locally repairable codes.

\section{Conclusion} \label{Sec:conclu}

In this paper, we characterised the coding scheme for robust locally repairable storage codes. This coding scheme overcomes a significant issue of the locally repairable codes which is their repair inefficiency under the circumstances when there exist extra failures or unavailability in the network. This coding scheme guarantees the local repairability of any failure at the presence of extra failures by constructing multiple local repair groups for each node in the network. In this case, if any of the repair groups is not available during the repair process, there exist alternative groups to locally repair the failed node. We also established a linear programming problem to upper bound the size of these codes. This practical bound can optimise the trade-off between different parameters of the code such as minimum distance, code length, locality, the number of alternative repair groups, and the number of the extra failures. We also provided two optimal robust locally repairable code examples.

\begin{figure}[t]
\centering
\includegraphics[width=1.2\textwidth]{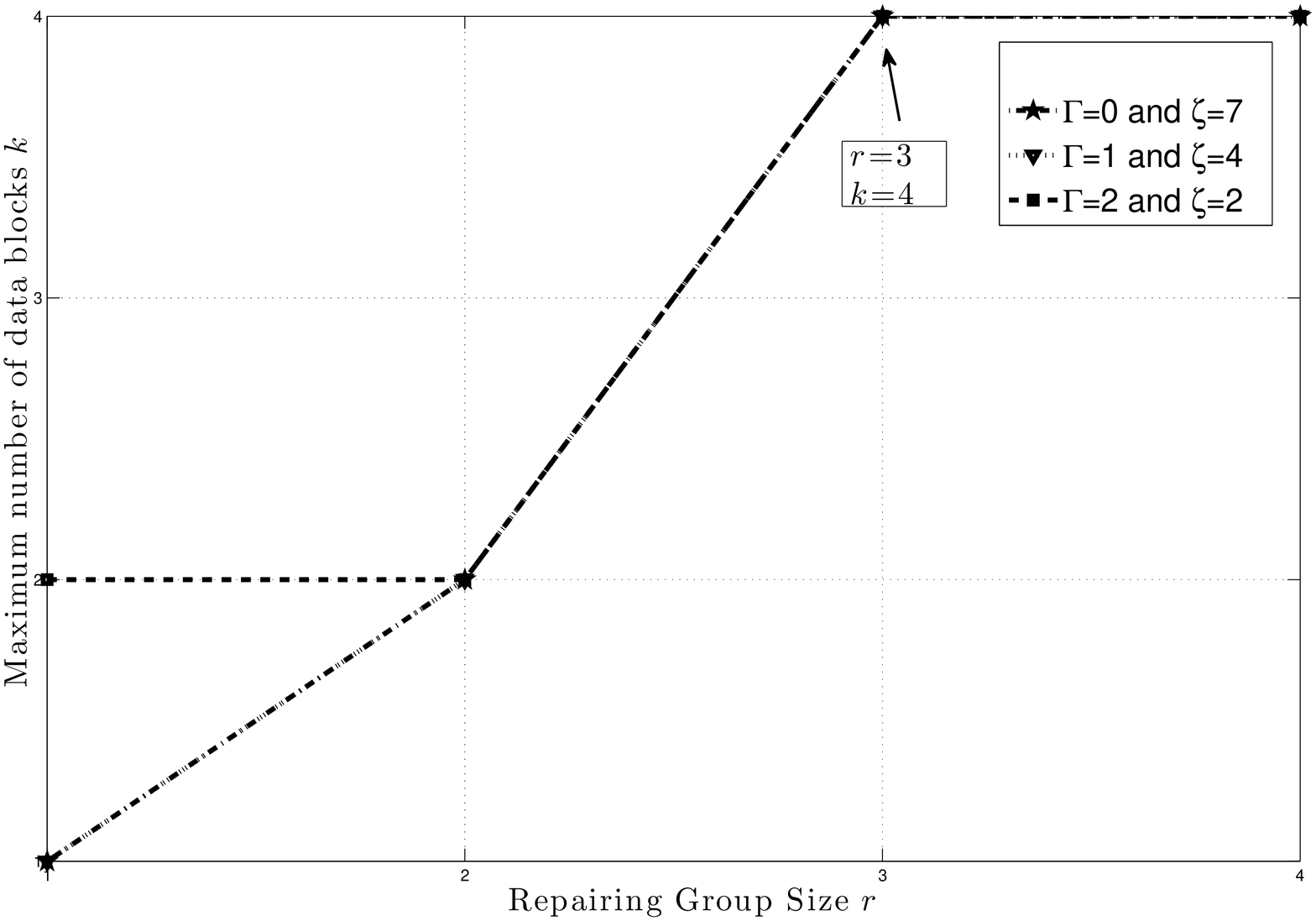}
\caption{Upper bounds for $(3,3,\Gamma,\zeta)$ binary robust locally repairable code of length $n=8$.}
\label{fig:Cube_Code_Sim}
\end{figure}
\end{example}

The update efficiency of the storage networks was addressed. We characterised an update efficient storage code such that any changes in the stored data will result in a small number of node updates. The necessary conditions for existence  of update-efficient storage codes was established.

We also showed how the symmetries in the codes can be exploit to significantly reduce the complexity of the constraints in the linear programming problem and in the necessary conditions.



\bibliographystyle{AIMS}
\bibliography{Ali.bib}

\medskip
Received xxxx 20xx; revised xxxx 20xx.
\medskip

\end{document}